    \let\Cref\crtCref
    \let\cref\crtcref
\newcommand{\porous}{\textsc{porous}}
\newcommand{\naturals}{\mathbb{N}}
\newcommand{\N}{\mathbb{N}}
\newcommand{\ints}{\mathbb{Z}}
\newcommand{\reals}{\mathbb{R}}
\newcommand{\rp}{\mathbb{R}_+}
\newcommand{\set}[1]{\left\{#1\right\}}
\newcommand{\vecit}[2]{#1^{(#2)}}
\newcommand{\target}{y}
\newcommand{\I}{\mathcal{I}}
\def\eps{\varepsilon}
\newcommand{\enc}[1]{[ #1 ]}
\newcounter{SideNoteCounter} \stepcounter{SideNoteCounter}
\newcommand{\lcm}{\operatorname{lcm}}
\declaretheorem[name={Meta Problem},Refname={Meta Problem,Meta Problems},refname={Meta Problem, Meta Problems}]{metaproblem}
\newcommand{\aff}[1]{\overline{#1}^a}
\newcommand{\zar}[1]{\overline{#1}^z}
\newcommand{\fuss}[1]{\overline{#1}^\reals}
\newcommand{\orbit}{\mathcal{O}}
\begin{document}
\title{Porous Invariants%
}
\author{Engel Lefaucheux\inst{1} 
\and Joël Ouaknine\inst{1} 
\and David Purser\inst{1} 
\and James Worrell\inst{2} 
} 
\authorrunning{E. Lefaucheux et al.}

\institute{Max Planck Institute for Software Systems, Saarland
  Informatics Campus, Saarbr\"ucken, Germany \and
Department of Computer Science, Oxford University, UK\\
}
\maketitle              %
\begin{abstract}
We introduce the notion of \emph{porous invariants} for multipath (or
branching/nondeterministic) affine loops over the integers; these
invariants are not necessarily convex, and can in fact contain
infinitely many `holes'. Nevertheless, we show that in many cases
such invariants can be automatically synthesised, and moreover can be
used to settle (non-)reachability questions for various interesting classes
of affine loops and target sets.

\keywords{Linear Dynamical Systems \and Linear loops \and Invariants
  \and Reachability \and Presburger arithmetic}
\end{abstract}
\section{Introduction}
We consider the reachability problem for multipath (or branching) affine loops over the integers, or equivalently for nondeterministic integer linear dynamical systems. A (deterministic) integer linear dynamical system consists of an update matrix $M \in \ints^{d\times d}$ together with an initial point $\vecit{x}{0} \in \ints^d$. We associate to such a system its infinite orbit $(\vecit{x}{i})$ consisting of the sequence of reachable points defined by the rule $\vecit{x}{i+1} = M\vecit{x}{i}$.
The reachability question then asks, given a target set $Y$, whether the orbit ever meets $Y$, i.e., whether 
there exists some time $i$ such that $\vecit{x}{i} \in Y$. The nondeterministic reachability question allows the linear update map to be chosen at each step from a fixed finite collection of matrices.

When the orbit does eventually hit the target, one can easily substantiate this by exhibiting the relevant finite prefix. However, establishing non-reachability is intrinsically more difficult, since the orbit consists of an infinite sequence of points. One requires some sort of finitary certificate, which must be a relatively simple object that can be inspected and which provides a proof that the set $Y$ is indeed unreachable. Typically, such a certificate will consist of an over-approximation $I$ of the set $R$ of reachable points, in such a manner that one can check both that $Y \cap I=\emptyset$ and $R\subseteq I$; such a set $I$ is called an invariant.

Formally we study the following problem for \textit{inductive invariants}:

\begin{metaproblem}\label{metaproblem} Consider a system with update functions $f_1,\dots, f_n$. A set $I$ is an inductive invariant if $f_i(I)\subseteq I$ for all $i$. Given a reachability query $(x,Y)$ we search for a separating inductive invariant $I$ such that $x\in I$ and $Y\cap I = \emptyset$.\end{metaproblem}

\cref{metaproblem} is parametrised by the type of invariants and targets that are considered; that is, what are the classes of allowable invariant sets $I$ and target sets $Y$, or equivalently how are such sets allowed to be expressed. 

Fixing a particular invariant and target domain, a reachability query has three possible scenarios: (1) the instance is reachable, (2) the instance is unreachable and a separating invariant from the domain exists, or (3) the instance is unreachable but no separating invariant exists. Ideally, one would wish to provide a sufficiently expressive invariant domain so that the latter case does not occur, whilst keeping the resulting invariants as simple as possible and computable. For some classes of systems, it is known that distinguishing reachability (1) from unreachability (2,3) is undecidable; it can also happen that
determining whether a separating invariant exists (i.e., distinguishing (2) from (3)) is undecidable.

We note that the existence of \emph{strongest} inductive invariants\footnote{Given two invariants $I$ and $I'$, we say that $I$ is \emph{stronger} than $I'$ iff $I \subseteq I'$; thus \emph{strongest} invariants correspond to \emph{smallest} invariant sets.} is a desirable property for an invariant domain---when strongest invariants exist (and can be computed), separating (2) from (1,3) is easy: compute the strongest invariant, and check whether it excludes the target state or not; if so, then you are done, and if not, no other invariant (from that class) can possibly do the trick either. However, unless (3) is excluded, computing the strongest invariant does not necessarily imply that reachability is decidable.
Unfortunately, strongest invariants are not always guaranteed to exist for a particular invariant domain, although some separating inductive invariant may still exist for every target (or indeed may not).

In prior work from the literature, typical classes of invariants are usually convex, or finite unions of convex sets. In this paper we consider certain classes of invariants that can have infinitely many `holes'
(albeit in a structured and regular way); we call such sets \emph{porous invariants}. These invariants can be represented via Presburger arithmetic\footnote{Presburger arithmetic is a decidable theory over the natural numbers, comprising Boolean operations, first-order quantification, and addition (but not multiplication).}. We shall work instead with the equivalent formulation of semi-linear sets, generalising ultimately periodic sets to higher dimensions, as finite unions of linear sets of the form $\set{b + p_1\mathbb{N} + \dots + p_m\mathbb{N}}$ (by which we mean $\set{b + a_1p_1 + \dots + a_mp_m\mid a_1,\dots,a_m\in\mathbb{N}}$, see \cref{defn:semi-linear}).

Let us first consider a motivating example:
\begin{example}[Hofstadter's MU Puzzle~\cite{douglas1979godel}]\label{mu:puzzle}
Consider the following term-rewriting puzzle over alphabet $\{M,U,I\}$. Start with the word $MI$, and by applying the following grammar rules (where $y$ and $z$ stand for arbitrary words over our alphabet), we ask whether the word $MU$ can ever be reached. 
\[
yI \to yIU \quad | \quad My \to Myy \quad | \quad yIIIz \to yUz \quad | \quad yUUz \to yz
\]

The answer is \emph{no}. One way to establish this is to keep track of the number of occurrences of the letter `$I$' in the words that can be produced, and observe that this number (call it $x$) will always be congruent to either $1$ or $2$ modulo $3$. In other words, it is not possible to reach the set
$\{x \mid x \equiv 0 \mod 3\}$. Indeed, Rules~$2$ and $3$ are the only rules that affect the number of $I$'s, and can be described by the system dynamics $x \mapsto 2x$ and $x \mapsto x -3$. Hence
the MU Puzzle can be viewed as a one-dimensional system with two affine updates,\footnote{One-dimensional affine updates are functions of the form $f(x) = ax + b$.} or a two-dimensional system with two linear updates.\footnote{$\begin{pmatrix} a & b\\ 0 & 1 \end{pmatrix}\begin{pmatrix}x \\ 1\end{pmatrix} = \begin{pmatrix}ax + b \\ 1\end{pmatrix}$ models affine functions using a matrix representation, holding one of the entries fixed to 1.} The set  $\set{1 + 3\ints}\cup \set{2 + 3\ints}$ is an inductive invariant, and we wish to synthesise this. (The stability of this set under our two affine functions is easily checked: both components are invariant under $x\mapsto x-3$, and $\set{1 + 3\ints}\mapsto \set{2 + 6\ints} \subseteq \set{2+3\ints}$ under $x\mapsto 2x$, and similarly $\set{2 + 3\ints}\mapsto \set{4 + 6\ints} \subseteq \set{1+3\ints}$.)

The problem can be rephrased as a safety property of the following multipath loop, verifying that the `bad' state $x= 0$ is never reached, or equivalently that the above loop can never halt, regardless of the nondeterministic choices made.\\
{\tt 
x $ = 1$\\
while x $\ne 0$\\
\indent x $ = 2$ x  $\mid\mid$  x $=$ x$-3$} $\qquad$(where $\mid\mid$ represents nondeterministic branching)\\

The MU Puzzle was presented as a challenge for algorithmic verification in \cite{ClarkeFHKOST03}; the tools considered in that paper (and elsewhere, to the best of our knowledge) rely upon the manual provision of an abstract invariant template. Our approach is to find the invariant fully automatically (although one must still abstract from the MU Puzzle the correct formulation as the program $x \mapsto 2x \mid\mid x \mapsto x-3$).
\end{example}

\subsubsection*{Main Contributions.}
Our focus is on the automatic generation of porous invariants for multipath affine loops over the integers, or equivalently nondeterministic integer linear dynamical systems.

\begin{itemize}
 \item We first consider targets consisting of a single vector (or `point targets'), and present the classes of invariants and systems for which invariants can and cannot be automatically computed for the reachability question. A summary of the results for linear and semi-linear invariants for these targets is given in \cref{fig:tableresults}. For completeness we also consider $\reals,\rp$-(semi)-linear sets, where we complete the picture from prior work by showing that strongest $\reals$-semi-linear invariants are computable.
 \begin{itemize}
 \item We establish the existence of \emph{strongest} $\ints$-linear invariants, and show that they can be found algorithmically %
   (\cref{thm:smallestZlinear}). These invariants may or may not separate the target under consideration. 
        \item If a $\ints$-linear invariant is not separating, we may instead look for an $\naturals{}$-semi-linear invariant (which generalises both $\ints$-semi-linear and $\naturals{}$-linear invariants), and we show that such an invariant can always be found for any
          unreachable point target when dealing with \emph{deterministic} integer linear dynamical systems (\cref{thm:nsemi-linear}).
 \item However, for nondeterministic integer linear dynamical systems, computing an $\naturals{}$-semi-linear invariants is an undecidable problem in arbitrary dimension (\cref{thm:undec}). Nevertheless we show how such invariants can be constructed in a low-dimensional setting,  in particular for affine updates in one dimension (\cref{thm:1daffine}).
   As an immediate consequence, this establishes that the multipath loop associated with the MU Puzzle belongs to a class of programs for which we can automatically synthesise $\naturals{}$-semi-linear invariants.

\end{itemize}
\item For \textit{full-dimensional}\footnote{The affine span covers the entire space.} $\ints$-linear targets we show that reachability is decidable, and, in the case of unreachability that a $\ints$-semi-linear invariant can always be exhibited as a certificate (\cref{thm:ztargets}). If the target is \emph{not} full-dimensional then the reachability problem is Skolem-hard and undecidable for deterministic and nondeterministic systems respectively.
\item In~\cref{sec:tool} we present our tool \porous\ which handles one-dimensional affine systems for both point and $\ints$-linear targets, solving both the reachability problem and producing invariants. Inter alia, this allows one to handle the multipath loop derived from the MU Puzzle in fully automated manner.
\end{itemize}

\begin{table}[t]
\centering
\begin{adjustbox}{width=\textwidth,center}
\begin{tabular}{|l|l|l|l|}
\hline
Dom &D/N & \textbf{Linear} & \textbf{Semi-linear} (SL) \\ \hline
  $\ints$ & det& Strongest computable %
                 (Thm.~\ref{thm:smallestZlinear}) & No strongest (Sec.~\ref{sec:no-strongest}); subsumed by $\naturals$-SL\\ \hline
  $\ints$ & non& Strongest computable %
                 (Thm.~\ref{thm:smallestZlinear}) & No strongest (Sec.~\ref{sec:no-strongest})  \\ \hline
$\naturals$ & det & No strongest (Sec.~\ref{sec:no-strongest}); subsumed by $\naturals$-SL & No strongest (Sec.~\ref{sec:no-strongest}), but sufficient computable (Thm.~\ref{thm:nsemi-linear})  \\ \hline 
$\naturals$ &non & No strongest (Sec.~\ref{sec:no-strongest}) & 1d-affine decidable (Thm.~\ref{thm:1daffine}); undecidable in general (Thm.~\ref{thm:undec})  \\ \hline 
$\mathbb{R}$ &det &  Strongest: affine relations by Karr~\cite{Karr76} &Strongest: affine closure on Zariski closure (Thm.~\ref{thm:fusses}) \\ \hline
$\mathbb{R}$ &non &  Strongest: affine relations by Karr~\cite{Karr76} &Strongest: affine closure on Zariski closure (Thm.~\ref{thm:fusses}) \\ \hline
$\mathbb{R}_+$ & det & No strongest (Sec.~\ref{sec:no-strongest}); subsumed by $\rp$-SL & No strongest, but sufficient computable~\cite{fijalkow2019monniaux} \\\hline
$\mathbb{R}_+$ &non& No strongest (Sec.~\ref{sec:no-strongest}) & Undecidable~\cite{fijalkow2019monniaux}\\\hline

\end{tabular}
\end{adjustbox}
\caption{Results for integer linear dynamical systems for a point target. Det/Non refers to deterministic or nondeterministic LDS\@. ``Subsumed by \ldots'' means that sufficient invariants can be generated, but of a more general type.}
\label{fig:tableresults}

\end{table}

\subsection{Related Work}
The reachability problem (in arbitrary dimension) for loops with a single affine update, or equivalently for deterministic linear dynamical systems, is decidable in polynomial time for point targets (that is $Y = \set{\target}$), as shown by Kannan and Lipton~\cite{KannanL86}.
However for nondeterministic systems (where the update matrix is chosen nondeterministically
from a finite set at each time step), reachability is undecidable, by 
reduction from the matrix semigroup membership problem~\cite{markov1947certain}.

In particular this entails that for unreachable nondeterministic instances we cannot hope \emph{always} to be able to compute a separating invariant. In some cases we may compute the strongest invariant (which may suffice if this invariant happens to be separating for the given reachability query), or we may compute an invariant in sub-cases for which reachability is decidable (for example in low dimensions). For some classes of invariants, it is also undecidable whether an invariant exists (e.g., polyhedral invariants~\cite{fijalkow2019monniaux}).

Various types of invariants have been studied for linear dynamical systems, including polyhedra~\cite{Monniaux19,fijalkow2019monniaux}, algebraic~\cite{HrushovskiOP018}, and o-minimal~\cite{almagor2019minimal} invariants. For certain classes of invariants (e.g., algebraic~\cite{HrushovskiOP018}), it is decidable whether a separating invariant exists, notwithstanding the reachability problem being undecidable.
Other works (e.g.,~\cite{CousotH78}) use heuristic approaches to generate invariants, without aiming for any sort of completeness.

Kincaid, Breck, Cyphert and Reps~\cite{KincaidBCR19} study loops with linear updates, studying the closed forms for the variables to prove safety and termination properties. Such closed forms, when expressible in certain arithmetic theories, can be interpreted as another type of invariant and can be used to over-approximate the reachable sets. The work is restricted to a single update function (deterministic loops) and places additional constraints on the updates to bring the closed forms into appropriate theories.

Bozga, Iosif and Konecn{\'{y}}'s  FLATA tool~\cite{BozgaIK10} considers affine functions in arbitrary dimension. However, it is restricted to affine functions with finite monoids; in our one-dimensional case this would correspond to limiting oneself to counter-like functions of the form $f(x) = x+b$.

Finkel, G{\"{o}}ller and Haase~\cite{FinkelGH13}, extending Fremont~\cite{Fremont2013}, show that reachability in a single dimension is $\PSPACE$-complete for polynomial update functions (and allowing states can be used to control the sequences of updates which can be applied).  The affine functions (and single-state restriction) we consider are a special case, but we focus on producing invariants to disprove reachability.

Other tools, e.g., \textsc{AProVE}~\cite{GieslABEFFHOPSS17} and B\"uchi Automizer~\cite{HeizmannHP14} may (dis-)prove termination/reachability on \textit{all} branches, but may not be able to prove termination/reachability on \textit{some} branch.

Inductive invariants specified in Presburger arithmetic have been used to disprove reachability in vector addition systems~\cite{Leroux10}. A generalisation, `almost semi-linear sets'~\cite{Leroux11} are also non-convex and can capture exactly the reachable points of vector addition systems. Our nondeterministic linear dynamical systems can be seen as vector addition systems over $\ints$ extended with affine updates (rather than only additive updates).

\section{Preliminaries}
We denote by $\ints{}$ the integers and $\naturals{}$ the non-negative integers.  We say that $x,y \in\ints$ are congruent modulo $d\in \naturals$, denoted $x \equiv y \mod d$, if $d$ divides $x-y$. Given an integer $x$ and natural $d$ we write $(x \mod d)$ for the number in $\set{0,\dots,d-1}$ such that $(x\mod d) \equiv x \mod d $.

\begin{definition}[Integer Linear Dynamical Systems]
A $d$-dimensional integer linear dynamical system (LDS) $(\vecit{x}{0},\{M_1,\dots,M_k\})$ is defined by an initial point $\vecit{x}{0}\in \ints{}^{d}$ and a set of integer matrices $M_1, \dots, M_k\subseteq \ints^{d\times d}$. An LDS is \emph{deterministic} if it comprises a single matrix ($k=1$) and is otherwise \emph{nondeterministic}.

A point $\target$ is \emph{reachable} if there exists $m\in\naturals$ and $B_1,\dots,B_m$ such that $B_1 \cdots  B_m \vecit{x}{0} = y$ and $B_i\in \set{M_1,\dots, M_k}$ for all $1\le i \le m$.

The \emph{reachability set} $\orbit \subseteq \ints{}^{d}$ of an LDS is the set of reachable points.
\end{definition}

\begin{definition}[$\mathbb{K}$-(semi)-linear sets]\label{defn:semi-linear}
A \emph{linear set} $L$ is defined by a base vector $b\in \ints^d$ and period vectors $p_1,\dots,p_d\in \ints^d$ such that  \[ L = 
\set{b+  a_1p_1+\dots + a_dp_d \mid a_1,\dots,a_d\in \mathbb{K}}.
\] 
For convenience we often write $\set{b + p_1\mathbb{K}+\dots + p_d\mathbb{K}}$ for $L$. A set is \emph{semi-linear} if it is the finite union of linear sets.
\end{definition}

$\naturals$-semi-linear sets are precisely those definable in Presburger arithmetic\linebreak ($\operatorname{FO}(\mathbb{Z}, + , \le)$)~\cite{gins}. However, we can also consider $\ints$-semi-linear sets (corresponding to $\operatorname{FO}(\mathbb{Z}, +)$ without order), and the real counterparts ($\reals$ and $\rp$). Note that even if $\mathbb{K}= \naturals$ we still allow $p_i \in \ints^d$.

\begin{definition}
Given an integer linear dynamical system $(\vecit{x}{0},\{M_1,\dots,M_k\})$, a set $I$ is an \emph{inductive invariant} if
\begin{itemize}
	\item $\vecit{x}{0}\in I$, and
	\item $\{M_i x \mid x \in I\}\subseteq I$ for all $i \in\set{1,\dots,k}$.
\end{itemize}
\end{definition}

Note in particular that every inductive invariant contains the reachability set ($\orbit \subseteq I$). 
We are interested in the following problem:
\begin{definition}[Invariant Synthesis Problem]
Given an invariant domain $\mathcal{D}$, an integer linear dynamical system $(\vecit{x}{0},\{M_1,\dots,M_k\})$, and a target $Y$, does there exist an inductive invariant $I$ in $\mathcal{D}$ disjoint from $Y$?
\end{definition}
In our setting, we are interested in classes $\mathcal{D}$ of invariants that are linear, or semi-linear. When a separating inductive invariant $I$ exists, we also wish to compute it.  Since (semi)-linear invariants are enumerable, the decision problem is, in theory, sufficient---although all of our proofs are constructive.

\section{\texorpdfstring{$\reals$}{R} Invariants: \texorpdfstring{$\reals$}{R}-linear and \texorpdfstring{$\reals$}{R}-semi-linear}
Before delving into porous invariants, let us consider invariants over the real numbers, i.e., described as $\reals$-(semi)-linear sets.

Strongest $\reals$-linear invariants are given precisely by the affine hull of the reachability set, and can be computed using Karr's algorithm~\cite{Karr76}. Moreover, we will show that strongest $\reals{}$-semi-linear invariants also exist and can be computed by combining  techniques for algebraic invariants~\cite{HrushovskiOP018} and $\reals{}$-linear invariants.

\paragraph*{$\reals$-linear.}
Recall that a set $L$ is $\mathbb{R}$-linear if $L = \set{v_{0} +v_{1}\reals+\dots+v_{t}\reals}$ for some $v_{0},\dots,v_{t} \in\ints^d$ that can be assumed to be linearly-independent\footnote{$v_0,\dots,v_m$ are linearly independent if there does not exist $a_0,\dots,a_m\in\mathbb{R}$, not all $0$, such that $a_0v_0+\dots+a_mv_m = 0$.} without loss of generality (and thus $t\le d$). Given two distinct points of $L$, every point on the infinite line connecting them must also be in $L$. Generalising this idea to higher dimensions, given a set $S\subseteq \mathbb{R}^d$, let the affine hull be \[\aff{S}= \set{\sum_{i=1}^k \lambda_i x_i \mid k\in\naturals, x_i \in S,\lambda_i \in \mathbb{R}, \sum_{i=1}^k \lambda_i = 1 }.\] 

Fix an LDS $(\vecit{x}{0},\{M_1,\dots,M_k\})$ and consider its reachability set $\orbit=$\linebreak
$\set{ M_{i_m}\cdots M_{i_1}\vecit{x}{0}  \mid  m \in \mathbb{N}, i_1,\ldots,i_m \in \set{1,\ldots,k} }$. 
Then $\aff{\orbit{}}$ is precisely the\linebreak strongest $\mathbb{R}$-linear invariant. Karr's algorithm~\cite{Karr76,Tzeng92} can be used to compute this strongest invariant in polynomial time. The next lemma follows from Theorem~3.1 of~\cite{Tzeng92}.
\begin{restatable}{lemma}{thmtzeng}\label{thm:tzeng}Given an LDS $(\vecit{x}{0}, \set{M_1,\dots, M_k})$ of dimension $d$, we can compute in time polynomial in $d$, $k$,
and $\log \mu$ (where $\mu>0$ is an upper bound on the absolute values of the integers appearing in $\vecit{x}{0}$ and $M_1,\ldots,M_k$), a $\mathbb{Q}$-affinely independent set of integer vectors $R_0 \subseteq \orbit{}$ such that:
\begin{enumerate}
\item  $\vecit{x}{0}\in R_0$,
\item the affine span of $R_0$ and the affine span of $\orbit{}$ are the same ($\aff{R_0} = \aff{\orbit{}}$), 
\item the entries of the vectors in $R_0$ have absolute value at most
  $\mu_0:=(d\mu)^d$.
\end{enumerate}
\end{restatable}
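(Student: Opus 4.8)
The plan is to construct $R_0$ greedily as an affinely independent subset of the orbit, driven by the structural fact that the chain of affine spans of the finite approximations of $\orbit{}$ stabilises within $d$ steps. For $j\in\naturals$, write $W_j$ for the set of points reachable in at most $j$ steps and set $V_j=\aff{W_j}$, so that $V_0\subseteq V_1\subseteq\cdots$ is an increasing chain of affine subspaces of $\reals^d$ with $V_0=\aff{\set{\vecit{x}{0}}}$. Since each $M_i$ is linear and affine hulls commute with linear maps ($\aff{M_i(S)}=M_i(\aff{S})$), one has $V_{j+1}=\aff{V_j\cup\bigcup_i M_i(V_j)}$. Hence the moment $V_{j+1}=V_j$ we get $M_i(V_j)\subseteq V_j$ for all $i$, the chain is stationary from then on, and $V_j=\aff{\orbit{}}$. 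As the dimension strictly increases at each non-stationary step and is bounded by $d$, we reach $V_d=\aff{\orbit{}}$.

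Accordingly, I would maintain a set $R_0\subseteq\orbit{}$ with $\aff{R_0}=V_j$, initialised to $\set{\vecit{x}{0}}$, and process it in passes. In pass $j+1$, for each $r$ present in $R_0$ at the start of the pass and each $i$, compute $M_i r\in\orbit{}$ and add it to $R_0$ whenever $M_i r\notin\aff{R_0}$. Using the commutation identity above, an induction on $j$ shows that after pass $j$ one has $\aff{R_0}=V_j$; thus a pass that adds nothing witnesses $M_i(V_j)\subseteq V_j$ and hence $V_j=\aff{\orbit{}}$, at which point the algorithm halts with $\aff{R_0}=\aff{\orbit{}}$. This gives item~2, and item~1 holds since $\vecit{x}{0}$ is never removed. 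A point is added only when it lies outside the current affine span, so affine independence of $R_0$ is preserved throughout; and as all vectors are integral, $\rationals$-affine independence coincides with $\reals$-affine independence, yielding the $\rationals$-affinely independent conclusion.

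For item~3 and the complexity, observe that each addition raises $\dim\aff{R_0}$ by one, so at most $d$ points are ever added, $\abs{R_0}\le d+1$, and at most $d+1$ passes occur; each pass tests at most $k\abs{R_0}$ candidate vectors for membership in an affine span, a polynomial-size linear-algebra computation over $\rationals$. The essential point for the size bound is that, since each pass applies the $M_i$ only to the vectors already present at its start, every vector retained in $R_0$ is an iterate $M_{i_\ell}\cdots M_{i_1}\vecit{x}{0}$ of depth $\ell\le d$. An induction on $\ell$ using $\norm{M_i r}_\infty\le d\mu\norm{r}_\infty$ then bounds the entries of such vectors by $(d\mu)^d\mu$, i.e.\ of bit-length $O(d\log(d\mu))$; this lies within a factor $\mu$ of the stated $\mu_0=(d\mu)^d$, the exact constant following from the refined accounting in~\cite{Tzeng92}, and the same bound keeps all rationals arising in the membership tests of polynomial bit-length, so the overall running time is polynomial in $d$, $k$ and $\log\mu$. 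The main obstacle is exactly this size control: a naive traversal of $\orbit{}$ would generate arbitrarily deep---and hence exponentially large---iterates, and the crux is to recognise that stabilisation of the span chain confines the search to depth $d$, simultaneously bounding the running time and the magnitudes of the retained vectors.
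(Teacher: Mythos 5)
Your proposal is correct and follows essentially the same route as the paper, which simply invokes Tzeng's Theorem~3.1: the same pruned breadth-first/greedy construction that adds an orbit point only when it lies outside the current (affine) span, the same stabilisation argument bounding the exploration depth by $d$, and the same per-step growth estimate $\norm{M_i r}_\infty \le d\mu\norm{r}_\infty$ for the magnitude bound (the paper additionally notes that affine independence reduces to linear independence by lifting the dimension by one, whereas you work with affine hulls directly via $\aff{M_i(S)} = M_i(\aff{S})$, an inessential difference). The factor-$\mu$ slack you flag against $\mu_0 = (d\mu)^d$ is no real divergence: the paper's own appendix accounting is equally coarse (it counts tree depth so that retained points are at most $d-1$ matrix applications deep, yielding $d^{d-1}\mu^d \le (d\mu)^d$), so your argument matches the paper's proof in substance.
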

Let $R_0 = \set{\vecit{x}{0}, r_1,\dots,r_{d'}}$ be obtained as per 
\cref{thm:tzeng}, with $d' \le d$. The $\reals$-linear invariant of the LDS is the affine span $\aff{R_0}$, which can be written as the $\reals$-linear  set $L_0 = \set{\vecit{x}{0} + (r_1 - \vecit{x}{0})\reals + \dots + (r_{d'} - \vecit{x}{0})\reals}$. 

\paragraph*{$\reals$-semi-linear.} Let us now generalise this approach to $\reals$-semi-linear sets. The  collection of $\mathbb{R}$-semi-linear sets, $\set{\bigcup_{i = 1}^m L_i \mid m\in\naturals, L_1,\dots, L_m \text{ are } \mathbb{R}\text{-linear sets} }$, is closed under finite unions and arbitrary intersections\footnote{When intersecting a linear set with a semi-linear set, either the latter does not change, or one obtains a finite union of elements of smaller dimension. Thus, in an infinite intersection, only a finite number of intersections affects the original set.}. Thus for any given set $X$, the smallest $\mathbb{R}$-semi-linear set containing $X$ is simply the intersection of all $\mathbb{R}$-semi-linear sets containing $X$. 
Let us denote by $\fuss{X}$ this smallest $\reals$-semi-linear set.
We are interested in $\fuss{\orbit}$.

\begin{restatable}{theorem}{thmfusses}\label{thm:fusses}
The strongest $\reals$-semi-linear invariant $\fuss{\orbit}$ of $\orbit{}$ is computable.
\end{restatable}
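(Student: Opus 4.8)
The plan is to reduce the computation of $\fuss{\orbit}$ to a computation on the Zariski closure of the orbit, for which an algorithm is already available, and then to read off the answer from its irreducible decomposition. The enabling observation is that every $\reals$-semi-linear set is Zariski closed: an $\reals$-linear set is an affine subspace, hence the zero set of finitely many affine-linear polynomials, and a finite union of varieties is again a variety. Consequently, if $S \supseteq \orbit$ is $\reals$-semi-linear then $S = \zar{S} \supseteq \zar{\orbit}$, so $S$ already contains the whole Zariski closure. This gives the identity $\fuss{\orbit} = \fuss{\zar{\orbit}}$: the smallest semi-linear superset of the orbit is the smallest semi-linear superset of $\zar{\orbit}$. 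Since $\zar{\orbit}$ is exactly the strongest algebraic invariant of the (possibly nondeterministic) LDS, it is computable by the algorithm of Hrushovski et al.~\cite{HrushovskiOP018}, and we obtain a defining ideal over $\rationals$.

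It then remains to compute $\fuss{V}$ for the given variety $V = \zar{\orbit}$. First I would compute the $\rationals$-irreducible decomposition $V = V_1 \cup \dots \cup V_r$ by standard effective algebraic geometry. The operator $\fuss{\cdot}$ distributes over finite unions: $\fuss{A}\cup\fuss{B}$ is semi-linear and contains $A\cup B$, whence $\fuss{A\cup B}\subseteq\fuss{A}\cup\fuss{B}$, while conversely any semi-linear superset of $A\cup B$ contains each of $A$ and $B$ and hence $\fuss{A}$ and $\fuss{B}$, giving the reverse inclusion. Therefore $\fuss{V} = \bigcup_{i=1}^{r}\fuss{V_i}$, and the task reduces to the irreducible case.

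The crux is the claim that $\fuss{W} = \aff{W}$ for every irreducible variety $W$. The inclusion $\fuss{W}\subseteq\aff{W}$ is immediate, as $\aff{W}$ is itself an $\reals$-linear set containing $W$. For the converse, let $S=\bigcup_j L_j$ be any semi-linear set with $W\subseteq S$, the $L_j$ being affine subspaces. Each $L_j$ is $\rationals$-Zariski closed, so $W = \bigcup_j (W\cap L_j)$ expresses the irreducible set $W$ as a finite union of relatively closed subsets; irreducibility forces $W = W\cap L_j$, i.e.\ $W\subseteq L_j$, for a single index $j$. Hence $\aff{W}\subseteq\aff{L_j}=L_j\subseteq S$, and taking the intersection over all such $S$ yields $\aff{W}\subseteq\fuss{W}$. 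The affine hull $\aff{V_i}$ is in turn computable as the zero set of the affine-linear polynomials in the ideal of $V_i$ (equivalently, with integer base and period vectors since $V_i$ is defined over $\rationals$), and the algorithm outputs $\bigcup_{i=1}^{r}\aff{V_i}$.

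A final routine check confirms that this set is an inductive invariant, and therefore the strongest $\reals$-semi-linear one, since every invariant must contain $\orbit$ and hence $\fuss{\orbit}$. Indeed $\vecit{x}{0}\in\orbit\subseteq\fuss{\orbit}$, and for each update $M_j$ the preimage $M_j^{-1}(\fuss{\orbit})$ is again semi-linear and contains $\orbit$ (because $M_j\orbit\subseteq\orbit\subseteq\fuss{\orbit}$), so minimality gives $\fuss{\orbit}\subseteq M_j^{-1}(\fuss{\orbit})$, that is $M_j\,\fuss{\orbit}\subseteq\fuss{\orbit}$. I expect the main obstacle to lie purely in the imported ingredient --- the effective computation of $\zar{\orbit}$ from \cite{HrushovskiOP018} --- together with the irreducibility step, which is precisely what licenses collapsing each component onto its affine hull; the remaining closure properties are routine.
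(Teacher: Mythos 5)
Your proposal follows essentially the same route as the paper: the identity $\fuss{\orbit} = \fuss{\zar{\orbit}} = \bigcup_i \aff{A_i}$ over the irreducible components of the Zariski closure is exactly \cref{prop:fussxisaffirri}, proved there by the same covering argument (an irreducible set inside a finite union of algebraic sets lies in a single one), with $\zar{\orbit}$ imported from~\cite{HrushovskiOP018} and the decomposition from~\cite{chistov1986algorithm}; your explicit checks that semi-linear sets are Zariski closed, that $\fuss{\cdot}$ distributes over finite unions, and that $\fuss{\orbit}$ is inductive are left implicit in the paper but are correct and routine.

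The one step where you genuinely diverge is the effective computation of $\aff{V_i}$, and it carries a subtlety you gloss over. The paper computes the affine hull by a saturation loop---repeatedly adjoining a point $y \in A \setminus W$ found by quantifier elimination over the reals, with termination after at most $d$ dimension increases---which works for any real algebraic set regardless of how its ideal is presented. You instead read the hull off as the zero set of the affine-linear polynomials in ``the ideal of $V_i$,'' and you decompose over $\rationals$. Both moves are sound \emph{only if} the ideal in hand is the actual vanishing ideal of the real points (a real radical ideal), not merely some defining ideal: for instance, $(x^2+y^2)$ defines the single real point $(0,0)$ but contains no linear polynomials, so its degree-one part cuts out all of $\reals^2$ rather than the origin; and $(x^2 + y^2(y-1)^2)$ is $\rationals$-irreducible while its real locus $\{(0,0),(0,1)\}$ is itself semi-linear and strictly smaller than its affine hull (the line $x=0$), so the claim $\fuss{W}=\aff{W}$ fails for $\rationals$-irreducible components of an arbitrary defining ideal. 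In your pipeline this is repairable: the vanishing ideal of $\orbit$ (a set of rational points) is automatically real radical, the minimal primes of a real radical ideal are real primes, and since the covering sets $L_j$ have integer base and period vectors they are rational, so the product-of-linear-forms argument over $\rationals[x]$ goes through even when a $\rationals$-component splits into conjugate pieces over $\reals$ (any rational affine cover of, say, $\{\pm\sqrt{2}\}$ must capture both points in one subspace). You should make this real-radicality requirement explicit, since it is precisely what licenses both the linear-part extraction and your irreducibility step; the paper's quantifier-elimination formulation of the hull computation, and its set-level (real) notion of irreducibility, avoid the issue entirely.
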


Algebraic sets are those that are definable by finite unions and intersections of zeros of polynomials. For example, $\{(x,y)\mid xy=0\}$ describes the lines $x= 0$ and $y=0$. 
The (real) Zariski closure $\zar{X}$ of a set $X$ is the smallest algebraic subset of $\mathbb{R}^d$ containing the set $X$. The Zariski closure of the set of reachable points, $\zar{\orbit{}}$, can be computed algorithmically~\cite{HrushovskiOP018}.

An algebraic set $A$ is \emph{irreducible} if whenever $A\subseteq B \cup C$, where $B$ and $C$ are algebraic sets, then we have $A\subseteq B$ or $A\subseteq C$.  Any algebraic set (and in particular a Zariski closure) can be written effectively as a finite union of irreducible sets~\cite{chistov1986algorithm}.

\begin{proposition}\label{prop:fussxisaffirri} Let $\zar{X} = A_1\cup\dots\cup A_k$, with $A_i$'s irreducible.  Then $\fuss{X} = \fuss{\zar{X}} = \fuss{A_1} \cup \dots \cup  \fuss{A_k} = \aff{A_1} \cup \dots \cup  \aff{A_k}$.
\end{proposition}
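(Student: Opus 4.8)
The plan is to read the four-term equality as a chain and verify each link separately, with essentially all of the content concentrated in the final link $\fuss{A_i} = \aff{A_i}$. Throughout I would lean on three soft properties of the operator $\fuss{\cdot}$: it is monotone, it is idempotent (since $\fuss{S}$ is already $\reals$-semi-linear, $\fuss{\fuss{S}} = \fuss{S}$), and its values are finite unions of affine subspaces.

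First I would establish $\fuss{X} = \fuss{\zar{X}}$. The inclusion $\fuss{X} \subseteq \fuss{\zar{X}}$ is immediate from $X \subseteq \zar{X}$ and monotonicity. For the reverse, the key observation is that $\fuss{X}$, being a finite union of $\reals$-linear sets and hence of affine subspaces, is \emph{itself an algebraic set} (each affine subspace is the zero set of finitely many affine forms, and a finite union of algebraic sets is algebraic). Since $\fuss{X}$ is then an algebraic set containing $X$, minimality of the Zariski closure gives $\zar{X} \subseteq \fuss{X}$; applying $\fuss{\cdot}$ and using idempotence yields $\fuss{\zar{X}} \subseteq \fuss{\fuss{X}} = \fuss{X}$.

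Next, for $\fuss{\zar{X}} = \fuss{A_1} \cup \dots \cup \fuss{A_k}$, I would use $\zar{X} = A_1 \cup \dots \cup A_k$. Each $A_i \subseteq \zar{X}$ gives $\fuss{A_i} \subseteq \fuss{\zar{X}}$, so $\bigcup_i \fuss{A_i} \subseteq \fuss{\zar{X}}$. Conversely, $\bigcup_i \fuss{A_i}$ is a finite union of $\reals$-semi-linear sets and therefore $\reals$-semi-linear (closure under finite unions, as noted in the text), and it contains $\bigcup_i A_i = \zar{X}$; minimality of $\fuss{\zar{X}}$ then forces $\fuss{\zar{X}} \subseteq \bigcup_i \fuss{A_i}$. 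This link uses nothing beyond closure under finite unions and minimality.

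The crux is the last link, $\fuss{A_i} = \aff{A_i}$ for irreducible $A_i$. The inclusion $\fuss{A_i} \subseteq \aff{A_i}$ is immediate because $\aff{A_i}$ is an affine subspace, hence an $\reals$-linear (a fortiori $\reals$-semi-linear) set containing $A_i$. For the reverse I would write $\fuss{A_i}$ as a finite union $L_1 \cup \dots \cup L_m$ of affine subspaces. Each $L_j$ is algebraic and $A_i \subseteq L_1 \cup \dots \cup L_m$, so by irreducibility (applied inductively to collapse the finite union down to the two-set definition) we get $A_i \subseteq L_j$ for some single $j$; since $L_j$ is an affine subspace containing $A_i$, it contains the smallest such set, namely $\aff{A_i}$, whence $\aff{A_i} \subseteq L_j \subseteq \fuss{A_i}$. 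I expect this step to be the main obstacle: the whole argument hinges on passing from ``$A_i$ sits inside a finite union of affine pieces of $\fuss{A_i}$'' to ``$A_i$ sits inside one of them,'' which is precisely where irreducibility must be invoked. Everything else reduces to monotonicity, idempotence, and closure under finite unions.
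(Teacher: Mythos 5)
Your proof is correct and follows essentially the same route as the paper's: the paper likewise rests on the observation that $\fuss{X}$ is a finite union of affine subspaces (hence algebraic, so it contains $\zar{X}$ and each $A_i$), applies irreducibility to place each $A_i$ inside a single $\reals$-linear piece $L_j$, and invokes minimality of $\aff{A_i}$ among $\reals$-linear supersets of $A_i$. The only difference is expository: you verify the four-term chain link by link (applying irreducibility inside $\fuss{A_i}$ and making the monotonicity, idempotence, and union-closure bookkeeping explicit), whereas the paper runs the same argument once directly inside $\fuss{X}$ and leaves those steps implicit.
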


\begin{proof}
Since $A_i \subseteq \fuss{X} = \cup_j L_j$, and $A_i$ is irreducible, we have $A_i \subseteq L_j$ for some $j$ (as the $L_j$'s are algebraic sets). Since $L_j$ is $\mathbb{R}$-linear, and $\aff{A_i}$ is the smallest $\mathbb{R}$-linear set covering $A_i$, we have $\aff{A_i} \subseteq L_j$. Taking $\fuss{X} = \aff{A_1} \cup \dots \cup  \aff{A_k}$ is thus optimal.
\qed \end{proof}

Thus $\fuss{\orbit}$ can be obtained by computing $\aff{A_i}$ for each irreducible $A_i$, where $\zar{\orbit} = A_1\cup\dots\cup A_k$.  To complete the proof of \cref{thm:fusses} it remains to confirm that affine hulls of algebraic sets can be computed algorithmically. Let us fix an algebraic set $A$, and let $W$ denote a set variable. Proceed as follows. Start with $W \leftarrow \set{x}$ for some point $x\in A$, and repeatedly let $W \leftarrow \aff{W \cup \set{y}}$, where $y \in A \setminus W$. Such a point $y$ can always be found using quantifier elimination in the theory of the reals. Each step necessarily increases the dimension, which can occur at most $d$ times, ensuring termination, at which point one has $\aff{A} = W$.

\section{Strongest \texorpdfstring{$\ints$}{Z}-linear Invariants}

Recall that a $\ints$-linear set $\set{q + p_1\ints + \dots + p_n \ints }$ is  defined by a base vector $q\in \ints^d$ and period vectors $p_1,\dots,p_n \in \ints^d$.  Equivalently, a $\ints$-linear set describes a \emph{lattice}, i.e., $\set{p_1\ints + \dots + p_n \ints }$, in $d$-dimensional space, translated to start from $q$ rather than $\vec{0}$.

\begin{theorem}\label{thm:smallestZlinear}Given a $d$-dimensional dynamical system $(\vecit{x}{0}, \set{M_1,\dots,M_k})$, 
  the \textit{strongest} $\ints$-linear inductive invariant containing the reachability set $\orbit{}$ exists
  and can be computed algorithmically.
\end{theorem}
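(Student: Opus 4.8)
The plan is to exploit the group-theoretic structure of $\ints$-linear sets: each such set is a coset $q + G$ of a subgroup (lattice) $G$ of $\ints^d$. First I would reduce the synthesis of the strongest $\ints$-linear inductive invariant to the computation of a single smallest subgroup. Since every inductive invariant contains $\vecit{x}{0}$, we may take the base point of a $\ints$-linear inductive invariant to be $\vecit{x}{0}$ and write it as $\vecit{x}{0} + G$. Unwinding the inductiveness condition $M_i(\vecit{x}{0}+G) \subseteq \vecit{x}{0}+G$ shows that it is equivalent to two requirements: (a) $M_i\vecit{x}{0} - \vecit{x}{0} \in G$ for every $i$ (the case of the zero coset representative), and (b) $M_i G \subseteq G$ for every $i$. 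Thus the $\ints$-linear inductive invariants are exactly the sets $\vecit{x}{0}+G$, where $G$ ranges over the subgroups of $\ints^d$ that contain the finite set $S = \set{M_i\vecit{x}{0} - \vecit{x}{0} \mid 1\le i \le k}$ and are invariant under every $M_i$; moreover $\vecit{x}{0}+G \subseteq \vecit{x}{0}+G'$ iff $G \subseteq G'$, so stronger invariants correspond to smaller subgroups.

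Next I would establish existence. The intersection of any family of subgroups is again a subgroup, and such an intersection still contains $S$ and is still $M_i$-invariant; hence the intersection $G^\ast$ of all subgroups satisfying (a) and (b) is the unique smallest one, and, being a subgroup of the finitely generated free abelian group $\ints^d$, is itself finitely generated and so genuinely $\ints$-linear. Therefore $\vecit{x}{0}+G^\ast$ is the strongest $\ints$-linear inductive invariant. I would also record that $G^\ast$ automatically contains $\orbit - \vecit{x}{0}$: a telescoping argument, writing $M_{i_m}\cdots M_{i_1}\vecit{x}{0} - \vecit{x}{0}$ as a sum of terms each obtained from $S$ by repeated application of the $M_i$, shows via (a) and (b) that $o - \vecit{x}{0} \in G^\ast$ for every reachable $o$, so $\orbit \subseteq \vecit{x}{0}+G^\ast$ as required. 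This is exactly what lets us work from the finite seed $S$ rather than from the infinite set $\orbit$.

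For computability I would describe a saturation procedure. Start with $G_0 := \langle S\rangle$, the subgroup generated by $S$, and iterate $G_{j+1} := G_j + M_1 G_j + \dots + M_k G_j$, each $G_j$ being a lattice kept as a finite generating set. This yields an ascending chain $G_0 \subseteq G_1 \subseteq \cdots$ of subgroups of $\ints^d$. Since $\ints^d$ is a Noetherian $\ints$-module, the ascending chain condition forces $G_{j+1}=G_j$ after finitely many steps, and the stabilised lattice is precisely the smallest $M_i$-invariant subgroup containing $S$, namely $G^\ast$. Each step is effective integer linear algebra: generators are manipulated via the Hermite (or Smith) normal form, which lets one compute the subgroup generated by a finite set, apply each $M_i$, and decide the equality $G_{j+1}=G_j$ that signals termination.

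The crux of the argument, and the step I expect to require the most care, is the reduction in the first paragraph together with the telescoping observation: one must verify that closing the finite set $S$ under the $M_i$ — rather than closing all of $\orbit$ — still produces an invariant containing the whole reachability set. Once this reduction is in place, the remaining ingredients (closure of subgroups under intersection, the ascending chain condition in $\ints^d$, and effectiveness of lattice arithmetic) are standard, so termination and correctness of the saturation follow routinely.
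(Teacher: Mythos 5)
Your proposal is correct, but it takes a genuinely different route from the paper's. You recast every $\ints$-linear inductive invariant as a coset $\vecit{x}{0}+G$ of a lattice $G\subseteq\ints^d$ and characterise inductiveness by two closure conditions, namely $M_i\vecit{x}{0}-\vecit{x}{0}\in G$ and $M_iG\subseteq G$ for all $i$; existence of a strongest invariant then falls out of the closure of this family of subgroups under intersection, and computability from the saturation $G_{j+1}=G_j+\sum_i M_iG_j$ seeded with $S=\set{M_i\vecit{x}{0}-\vecit{x}{0} \mid 1\le i\le k}$, terminating by the ascending chain condition for submodules of the Noetherian $\ints$-module $\ints^d$. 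The paper works directly with the sets: the analogue of your closure conditions is \cref{prop:canadddir} (differences of members may be adjoined as periods) together with \cref{prop:covering} (smallest $\ints$-linear set covering a union), and its saturation loop (\cref{lst:1}) is essentially your fixpoint iteration with Hermite normal form---but seeded differently, with a lattice $L_0$ of already-maximal dimension built from the affinely independent orbit points supplied by Karr/Tzeng's algorithm (\cref{thm:tzeng}). That seeding is precisely what powers the paper's termination argument: after $L_0$ the dimension is fixed, each strict step at least halves the volume of the fundamental parallelepiped, so the loop stabilises within $\sum_i\log\abs{p_i}$ iterations, an explicit bound with complexity content; in your chain the rank may still grow along the way, and the ACC gives termination with no effective iteration bound. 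In exchange, your argument is cleaner and more self-contained: it dispenses with Karr's algorithm entirely, your characterisation makes \cref{prop:canadddir} trivially true in group-theoretic terms, and your telescoping observation---by induction, $y_j-y_{j-1}=(M_{i_j}\vecit{x}{0}-\vecit{x}{0})+(M_{i_j}-I)(y_{j-1}-\vecit{x}{0})\in G^\ast$, so $\orbit\subseteq\vecit{x}{0}+G^\ast$---correctly replaces the paper's reliance on the orbit points $R_0\subseteq\orbit$. Since the theorem only claims existence and computability, both proofs establish it in full; the paper's version additionally supports its later efficiency remarks.
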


The image of a $\ints$-linear set $L = \set{q + p_1\ints + \dots +  p_n\ints}$ by a matrix $M$ 
is the $\ints$-linear set: $M(L) = \set{ Mq + (Mp_1)\ints + \dots + (Mp_n)\ints }$. The following lemma asserts that when two points are in a $\ints$-linear set, the direction between these two points can be applied from any reachable point, and hence this direction can be included as a period without altering the set.

\begin{proposition}\label{prop:canadddir}
  Let $L = \set{q + a_1p_1+\dots+a_np_n\mid a_1,\dots,a_n\in\ints}$ be a $\ints$-linear set.
  If $x,y\in L$ then for all $z\in L$ and all $a'\in \ints$ we have $z + (y-x)a'\in L$.  In particular, we have
  $L = \set{q + a_1p_1+\dots+a_np_n + a'(y-x)\mid a_1,\dots,a_n,a'\in\ints}$.
\end{proposition}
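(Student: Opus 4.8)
The plan is to reduce the whole statement to one observation: the difference $y-x$ lies in the translation-free lattice $\set{p_1\ints+\dots+p_n\ints}$ underlying $L$, so that offering it as an additional period is redundant. First I would record membership witnesses. Since $x,y\in L$, there are integer coefficients with $x = q + \sum_{i=1}^n b_i p_i$ and $y = q + \sum_{i=1}^n c_i p_i$. Subtracting, the base vector $q$ cancels and I obtain $y-x = \sum_{i=1}^n (c_i-b_i)p_i$, an integer combination of the periods alone.

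For the first claim I would then take an arbitrary $z = q + \sum_{i=1}^n d_i p_i \in L$ together with any $a'\in\ints$, and substitute the expression just derived for $y-x$ to get
\[
z + a'(y-x) = q + \sum_{i=1}^n \br{d_i + a'(c_i-b_i)} p_i .
\]
Since each coefficient $d_i + a'(c_i-b_i)$ is again an integer, the right-hand side is manifestly of the defining form for $L$, so $z + a'(y-x)\in L$.

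Finally, the ``in particular'' equality of sets follows by double inclusion. The inclusion of $L$ into the set with the extra period $(y-x)$ is immediate by taking $a'=0$. Conversely, any point of the form $q + \sum_i a_i p_i + a'(y-x)$ can be written as $z + a'(y-x)$ with $z = q + \sum_i a_i p_i\in L$, and hence lies in $L$ by the first part. The only genuinely substantive step is the cancellation of the base vector $q$ when forming $y-x$ (which is what pins $y-x$ into the pure lattice); after that, everything is a routine rearrangement of integer coefficients, so I do not anticipate any real obstacle.
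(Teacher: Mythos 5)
Your proof is correct and follows essentially the same route as the paper's: express $x$, $y$, and $z$ in terms of the base $q$ and periods $p_i$, note that $q$ cancels in $y-x$ so that $y-x$ is an integer combination of the periods alone, and absorb $a'(y-x)$ into the coefficients of $z$. Your explicit double-inclusion argument for the ``in particular'' set equality is a minor elaboration the paper leaves implicit, but the substance is identical.
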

\begin{proof}
If $x = q + a_1p_1+\dots+a_np_n$ and $y= q + b_1p_1+\dots+b_np_n$ then $y-x = q + b_1p_1+\dots+b_np_n - (q + a_1p_1+\dots+a_np_n)= (b_1-a_1)p_1+\dots+(b_n-a_n)p_n$.

Then for any $z = q + c_1p_1+\dots+c_np_n$, we have 
$z + a'(y-x) = q +  c_1p_1+\dots+c_np_n + a'((b_1-a_1)p_1+\dots+(b_n-a_n)p_n) = q +  (c_1 + a'(b_1-a_1))p_1+\dots+(c_n+ a'(b_n-a_n))p_n)$ where $ (c_i + a'(b_i-a_i))\in\ints$, so $z + a'(y-x) \in L$.
\qed \end{proof}

\begin{proposition}\label{prop:covering}
Given two $\ints$-linear sets $L_1=\set{q +p_1\ints + \dots +p_n \ints}$ and $L_2= \set{s + t_1\ints + \dots +t_m \ints}$, there exists a smallest $\ints$-linear set $L$ containing $L_1\cup L_2$: the set $L = \set{q + (s-q)\ints + p_1\ints + \dots + p_n \ints +  t_1\ints + \dots + t_m \ints }$.
\end{proposition}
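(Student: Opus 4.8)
The plan is to prove the two halves of a ``smallest superset'' claim: first that the proposed set $L$ is a $\ints$-linear set containing $L_1 \cup L_2$, and second that $L$ is contained in \emph{every} $\ints$-linear set containing $L_1 \cup L_2$, so that it is the smallest such set.

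The first half is essentially by inspection. The set $L$ is presented as a base vector $q$ together with integer spans of the period vectors $s-q, p_1,\dots,p_n,t_1,\dots,t_m$, so it is $\ints$-linear by definition. To see $L_1 \subseteq L$, take the coefficient of $s-q$ and of every $t_j$ to be $0$. To see $L_2 \subseteq L$, rewrite $s = q + (s-q)$, set the coefficient of $s-q$ to $1$, set every $p_i$ coefficient to $0$, and let the $t_j$ coefficients range freely; this recovers exactly $\set{s + t_1\ints + \dots + t_m\ints}$.

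The substantive half is minimality, and the key tool is \cref{prop:canadddir}. I would fix an arbitrary $\ints$-linear set $L'$ with $L_1 \cup L_2 \subseteq L'$ and show that each generating direction of $L$ is ``absorbed'' by $L'$. Since $q$ and $q+p_i$ both lie in $L_1 \subseteq L'$, \cref{prop:canadddir} (with $x=q$, $y=q+p_i$) gives $z + a'p_i \in L'$ for all $z\in L'$ and $a'\in\ints$; symmetrically, from $s, s+t_j \in L_2 \subseteq L'$ each $t_j$ may be freely added; and from $q,s\in L'$ the direction $s-q$ may be freely added. Starting from $q\in L'$ and chaining these closure properties, one reaches an arbitrary point $q + a(s-q) + \sum_i b_i p_i + \sum_j c_j t_j$ of $L$ without ever leaving $L'$, whence $L \subseteq L'$. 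Combined with the first half, this shows $L$ is the smallest $\ints$-linear set containing $L_1\cup L_2$.

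The main obstacle is really just a bookkeeping subtlety in the minimality direction: one must use the fact that \cref{prop:canadddir} permits adding a direction starting from \emph{every} point of $L'$, not merely from the particular witness pair, since this is exactly what licenses chaining the additions across all generators. Once that observation is in place, the argument is a finite sequence of applications of \cref{prop:canadddir}, one per period vector of $L$.
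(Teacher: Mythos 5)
Your proposal is correct and follows essentially the same route as the paper: direct verification that $L_1\cup L_2\subseteq L$ by choosing coefficients $0$ and $1$ for $s-q$, and minimality via \cref{prop:canadddir} applied to the witness pairs $(q,\,q+p_i)$, $(s,\,s+t_j)$, and $(q,\,s)$ inside an arbitrary $\ints$-linear superset. Your explicit remark that \cref{prop:canadddir} licenses adding a direction from \emph{every} point (enabling the chaining across all generators) is exactly the content the paper compresses into ``straightforward consequence,'' so the two arguments coincide.
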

\begin{proof}
First we show $L_1\cup L_2\subseteq L$:
\begin{itemize}
\item If $x = q + a_1p_1+\dots+a_np_n \in L_1$, then $ x= q + (s-q) 0+ a_1p_1+\dots+a_np_n + 0t_1+\dots+0t_m \in L$.

\item If $x = s +  b_1t_1+\dots+b_mt_m \in L_2$, then $x = q + (s-q)1 +  0p_1+\dots+0p_n +  b_1t_1+\dots+b_mt_m \in L$.
\end{itemize}
Next we show minimality as a straightforward consequence of~\cref{prop:canadddir}.

Clearly the vectors $p_1,\dots, p_n$ can be added by \cref{prop:canadddir} because any two points of $L_1$ differing by $p_i$ guarantees that adding $p_i$ does not alter the resulting set.
Similarly, $t_1,\dots, t_m$ can also be included.
Finally, by \cref{prop:canadddir}, the vector $s-q$ can be included because $q$ and $s$ both belong to $L_1\cup L_2$.
\qed \end{proof}

A $d$-dimensional lattice can always be defined by at most $d$ vectors; and thus if $d$ is the dimension of the matrices, no more than $d$ period vectors are needed in total. 
However, \cref{prop:covering} induces a representation which may over-specify the lattice by producing more than $d$ vectors to define the lattice. 
\begin{example}
  Consider the lattice  $\set{(2,2)\ints+ (0,6)\ints+(2,6)\ints }$, specified with three vectors, which is equivalent to the lattice $\set{(2,0)\ints + (0,2)\ints}$. Note that one may not simply pick 
an independent subset of the periods, as none of the following sets are equal:   $\set{(2,2)\ints + (0,6)\ints }$,  $\set{(2,2)\ints + (2,6)\ints }$,  $\set{(0,6)\ints + (2,6)\ints }$, and  $\set{(2,2)\ints+ (0,6)\ints+(2,6)\ints}$.
\end{example}

The \emph{Hermite normal form} can be used to obtain a basis of the vectors that define the lattice. Consider a lattice $L_i = \set{p_1\ints+\dots+p_d\ints}$. The lattice remains the same
if $p_i$ is swapped with $p_j$, if $p_i$ is replaced by $-p_i$, or if $p_i$ is replaced by $p_i + \alpha p_j$ where $\alpha$ is any fixed integer\footnote{The last replacement is valid, since if $x = y + \beta p_i \in L$ then  $x= y + \beta(p_i+\alpha p_j) -\beta\alpha p_j$ is in the new lattice.}. 

These are the unimodular operations. The Hermite normal form of a matrix $M$ is a matrix $H$ such that $M = UH$, where $U$ is a unimodular matrix (formed by unimodular column operations) and $H$ is lower triangular, non-negative and each row has a unique maximum entry which is on the main diagonal. Such a form always exists, and the columns of $H$  form a basis of the same lattice as the columns of $M$, because they differ up to unimodular (lattice-preserving) operations. There are many texts on the subject; we refer the reader to the lecture notes of Shmonin~\cite{ShmoninNotes2009} for more detailed explanations.

The columns of a matrix in Hermite normal form constitute a unique basis for the lattice (up to additional redundant zero columns). Hence a basis of minimal dimension can be obtained by computing the Hermite normal form of the matrix formed by placing the period vectors into columns.

We now prove the main theorem:
\begin{proof}[Proof of \cref{thm:smallestZlinear}] 
  We claim that \cref{lst:1} returns the strongest $\ints$-linear invariant $I$.

\cref{lst:1} proceeds in two phases:
\begin{itemize}
\item First find a necessary subset $L_0\subseteq I$ of the invariant having already the same dimension as $I$.
\item Then compute a growing sequence $L_0\subsetneq L_1\subsetneq \dots \subsetneq L_{m-1} =
  L_{m}= I$, where at each step the algorithm merely increases the density of the attendant sets in order to `fill in' missing points of the invariant.
\end{itemize}

Recall the set $R_0 = \set{\vecit{x}{0}, r_1,\dots,r_{d'}} \subseteq \orbit{}$, with $d' \le d$, from \cref{thm:tzeng}.  The resulting $\ints$-linear set $L_0 = \set{\vecit{x}{0} + (r_1 - \vecit{x}{0})\ints + \dots + (r_{d'} - \vecit{x}{0})\ints}$ is then a $d'$-dimensional porous subset of the $d'$-dimensional affine hull of the orbit ($L_0 \subseteq \aff{\orbit{}}$). Applying $M_1,\dots, M_k$ can only increase the density, but not the dimension. As each $r_i$ and $\vecit{x}{0}$ are in $\orbit{}$, by \cref{prop:canadddir} we can assume that each of the directions $(r_i - \vecit{x}{0})$ must be represented in any $\ints$-linear set containing $\orbit{}$, and we therefore have that $L_0 \subseteq I$.

In the second phase, we `fill in' the lattice as required to cover the whole of $\orbit$. To do this we repeatedly apply the covering procedure of \cref{prop:covering}. That is, $L_{i+1}$ is the smallest $\ints$-linear set covering $L_i \cup M_1(L_i)\cup\dots \cup M_k(L_i)$. To keep the number of vectors small, we keep the period vectors of the $\ints$-linear set in Hermite normal form.

The vectors $p_1 = (r_1 - \vecit{x}{0}), \dots,p_{d'} = (r_{d'} - \vecit{x}{0})$ form a parallelepiped (hyper-parallelogram) that repeats regularly. There are a finite number of integral points inside this parallelepiped. If new points are added in some step, they are added to every parallelepiped. Thus we can add new points finitely many times before saturating or becoming fixed. The volume of the parallelepiped is bounded above by $|p_1| \cdots |p_{d'}|$.

At each step, the volume of the parallelepiped must at least halve, thus the volume at step $t$ is $vol_t \le |p_1| \cdots |p_{d'}| / 2^t$. %
The procedure must saturate at or before the volume becomes $1$, which occurs after at most $\log(|p_1| \cdots |p_{d'}|) = \sum_i \log(|p_i|)$ steps. At each step, for efficiency considerations, we convert the $\ints$-linear set into Hermite normal form to retain exactly $d'$ period vectors. 

\begin{claim}[$I$ is the strongest invariant]
For every invariant $J$, we have $I\subseteq J$.
\end{claim}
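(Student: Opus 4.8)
The plan is to prove that the set $I$ produced by the algorithm is contained in \emph{every} $\ints$-linear invariant $J$, which combined with the already-established fact that $I$ is itself an invariant containing $\orbit$ yields that $I$ is the strongest such invariant. The proof is by induction on the sequence $L_0 \subsetneq L_1 \subsetneq \dots \subsetneq L_m = I$, showing $L_t \subseteq J$ for every $t$; since $I = L_m$, this suffices.

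For the base case, I would fix an arbitrary $\ints$-linear invariant $J$ and argue $L_0 \subseteq J$. Since $J$ is an invariant it contains $\orbit$, and in particular it contains $\vecit{x}{0}, r_1, \dots, r_{d'}$, all of which lie in $\orbit$ by \cref{thm:tzeng}. Because $J$ is $\ints$-linear and contains both $\vecit{x}{0}$ and each $r_i$, \cref{prop:canadddir} lets me adjoin each direction $(r_i - \vecit{x}{0})$ as a period of $J$ without changing $J$; hence every point of $L_0 = \set{\vecit{x}{0} + (r_1 - \vecit{x}{0})\ints + \dots + (r_{d'} - \vecit{x}{0})\ints}$ already belongs to $J$, giving $L_0 \subseteq J$.

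For the inductive step, assume $L_t \subseteq J$. By the recurrence, $L_{t+1}$ is the smallest $\ints$-linear set covering $L_t \cup M_1(L_t) \cup \dots \cup M_k(L_t)$. Since $J$ is an invariant, $M_i(L_t) \subseteq M_i(J) \subseteq J$ for each $i$, so the union $L_t \cup M_1(L_t) \cup \dots \cup M_k(L_t)$ is contained in $J$. Now $J$ is a $\ints$-linear set containing this union, and $L_{t+1}$ is by \cref{prop:covering} the \emph{smallest} $\ints$-linear set containing it, so $L_{t+1} \subseteq J$. This closes the induction, and taking $t = m$ yields $I = L_m \subseteq J$.

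The routine parts here are the two set containments and the appeal to minimality; the main subtlety to get right is ensuring that the Hermite-normal-form reduction performed at each step genuinely preserves the set $L_{t+1}$ (so that the object reasoned about abstractly via \cref{prop:covering} is literally the one the algorithm stores), since the minimality argument is stated for the set $L_{t+1}$ rather than for any particular choice of generators. As the unimodular operations underlying Hermite normal form are lattice-preserving, this causes no difficulty, but it is the point where one must be careful to distinguish the represented set from its representation.
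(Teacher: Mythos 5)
Your proof is correct and follows essentially the same route as the paper's: induction along the chain $L_0 \subseteq L_1 \subseteq \dots \subseteq L_m = I$, using \cref{prop:canadddir} to show $L_0 \subseteq J$ and the minimality of the covering from \cref{prop:covering} for the inductive step. Your closing remark about the Hermite-normal-form reduction preserving the represented lattice is a point the paper also addresses (via the lattice-preserving unimodular operations), so it introduces no divergence.
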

By induction, let us prove that every invariant $J$ must contain $L_i$. Clearly this is the case for $L_0$ because all points of $R_0 \subseteq \orbit{}$ must be in $J$ and every period vectors in $L_0$ can be present, without loss of generality, thanks to \cref{prop:canadddir}. Assume $L_i\subseteq J$. Then it must be the case that $J$ contains every $M_j(L_i)$, as otherwise it would not be an invariant. It therefore follows that $J$ must contain $L_{i+1}$, since the latter is the minimal $\ints$-linear set containing $L_i$ and $M_j(L_i)$ for all $j\leq k$. Finally, since $I$ is itself one of the $L_i$'s, we have $I \subseteq J$ as required. \qed
\end{proof}

\begin{algorithm}[t]
\KwIn{$\vecit{x}{0}$,$M_1,\dots, M_k$ }

Compute $R_0 = \set{\vecit{x}{0},r_1,\dots, r_{d'}} \subseteq \orbit{}$

Compute $p_i = r_i - \vecit{x}{0}$ for $i\in \set{1,\dots,d'}$

$L_0 = \set{\vecit{x}{0}+ p_1\ints  +\dots  + p_{d'}\ints}$

\While{True} {

	$L_i = \operatorname{Covering}(L_{i-1}\cup M_1( L_{i-1})\cup \dots\cup M_k (L_{i-1}))$

	$H_i= \operatorname{HermiteNormalForm}(L_i)$

	$L_i = \set{\vecit{x}{0} + h_1\ints +\dots  + h_{d'}\ints \mid h_j 
\text{ column of } H_i}$
	
	\If{ $L_i = L_{i-1}$}{
		\Return{$L_i$}
	}
}
\caption{Strongest $\ints$-linear  invariant for LDS $(\vecit{x}{0},M_1,\dots,M_k)$}
\label{lst:1}
\end{algorithm}

\begin{remark}
Note that a $\ints$-linear set is not sufficient for the MU puzzle: both $1$ and $2$ are in the reachability set, thus $\set{1 + 1\ints}= \ints$ is the strongest $\ints$-linear invariant.
\end{remark}

\subsection{Extensions of \texorpdfstring{$\ints$}{Z}-linear sets without strongest invariants}\label{sec:no-strongest}

In this section we show that several generalisations of $\ints$-linear domains fail to admit strongest invariants. 

$\ints$-semi-linear sets are unions of $\ints$-linear sets, and therefore can include singletons. Consider the deterministic dynamical system starting from point $1$ and doubling at each step $\mathcal{M} = (1,(x\mapsto 2x))$. This system has reachability set $\orbit = \set{2^k \mid k\in \naturals}$, which is not even $\naturals{}$-semi-linear (our most general class). For this LDS we can construct the invariant  $\set{2,4,8,..., 2^k} \cup \set{2^{k+1} p_1 \mid p_1\in \ints }$ for each $k$. For any proposed strongest $\ints$-semi-linear invariant, one can find a $k$ for which the corresponding invariant is an improvement.

$\naturals{}$-linear sets generalise $\ints$-linear sets (observe that $\ints$-linear sets are a proper subclass, since $\set{x+ p_i\ints}$ can be expressed as $\set{x+ (-p_i)\naturals{} + p_i\naturals}$, but $\set{x+ p_i\naturals}$ is clearly not $\ints$-linear). Consider the LDS $((x_1,x_2),\left(\begin{smallmatrix}0&1\\1&0\end{smallmatrix}\right))$, with a reachability set consisting of just two points $x= (x_1,x_2)$ and $y=(x_2,x_1)$. There are two incomparable candidates for the minimal $\naturals{}$-linear invariant: $\set{x + (y-x)\naturals}$ and $\set{y + (x-y)\naturals}$. Similarly for $\rp$-linear invariants, the sets $\set{y + (x-y)\rp}$ and $\set{x + (y-x)\rp}$ are incomparable half-lines.

\subsection{\texorpdfstring{$\ints$}{Z}-linear targets}

We have so far only considered invariants for point targets. We now turn to lattice-like targets, in particular targets specified as \textit{full-dimensional} $\ints$-linear sets.

\begin{theorem}\label{thm:ztargets}
It is decidable whether a given LDS $(\vecit{x}{0},\set{M_1,\dots,M_k})$ reaches a \textit{full-dimensional} $\ints$-linear target $Y = \set{x + p_1\ints +\dots + p_d\ints}$, with $x,p_i\in\ints^d$.

Furthermore, for unreachable instances, a $\ints$-semi-linear inductive invariant can be provided.
\end{theorem}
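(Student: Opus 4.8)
The plan is to exploit the finite index of the target lattice. Write $\Lambda = \set{p_1\ints + \dots + p_d\ints}$ for the underlying lattice of $Y = x + \Lambda$. Since $Y$ is full-dimensional, the matrix $P$ whose columns are $p_1,\dots,p_d$ is nonsingular, so $\Lambda$ has finite index $N := \abs{\det P}$ in $\ints^d$. The key observation is that $N\ints^d \subseteq \Lambda$ (every element of the order-$N$ group $\ints^d/\Lambda$ is annihilated by $N$), and that the sublattice $N\ints^d$ is \emph{preserved} by every $M_i$, since $M_i(N\ints^d) = N\,M_i(\ints^d) \subseteq N\ints^d$ because the $M_i$ are integer matrices.

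First I would reduce the dynamics modulo $N$. Because each $M_i$ preserves $N\ints^d$, it induces a well-defined map $\overline{M_i}$ on the finite group $(\ints/N\ints)^d = \ints^d/N\ints^d$, and reduction mod $N$ intertwines the two dynamics: the image modulo $N$ of $\orbit$ is exactly the reachable set $S \subseteq (\ints/N\ints)^d$ of the finite system $(\overline{\vecit{x}{0}}, \set{\overline{M_1},\dots,\overline{M_k}})$, since $\overline{M_i v} = \overline{M_i}\,\overline{v}$. This set $S$ is computable by a straightforward fixed-point/breadth-first search over the finite state space.

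Next I would show that reachability of $Y$ is equivalent to reachability, in the quotient, of the finite target $\overline{Y} := \set{v \bmod N \mid v \in Y}$. The forward direction is immediate. For the converse---the only place where full-dimensionality is essential---suppose some reachable $z \in \orbit$ satisfies $\overline{z} \in \overline{Y}$; then $z - v \in N\ints^d \subseteq \Lambda$ for some $v \in Y$, and since $v - x \in \Lambda$ we obtain $z - x \in \Lambda$, i.e.\ $z \in Y$. Hence $\orbit$ meets $Y$ iff $S \cap \overline{Y} \neq \emptyset$, which is decidable.

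Finally, for the unreachable case I would read off the invariant directly from $S$. The set $S$ is inductive in the quotient ($\overline{M_i}(S) \subseteq S$) and, by unreachability, disjoint from $\overline{Y}$. Its full preimage $I := \set{v \in \ints^d \mid \overline{v} \in S}$ is a finite union of cosets of $N\ints^d$, each coset $r + N\ints^d = \set{r + (Ne_1)\ints + \dots + (Ne_d)\ints}$ (with $e_1,\dots,e_d$ the standard basis vectors) being $\ints$-linear, so $I$ is $\ints$-semi-linear. One then checks that $\vecit{x}{0} \in I$, that $I$ is inductive (push $M_i$ through reduction mod $N$ and use $\overline{M_i}(S)\subseteq S$), and that $I \cap Y = \emptyset$ (any common point would reduce into $S \cap \overline{Y}$). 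The main obstacle is really the reverse implication of the third step; once $N\ints^d \subseteq \Lambda$ is in hand, everything else reduces to routine finite-state reasoning.
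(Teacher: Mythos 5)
Your proposal is correct and follows essentially the same route as the paper: both reduce the dynamics modulo a fixed integer whose multiples of $\ints^d$ lie inside the target lattice, compute the finite set of reachable residues by a fixed-point search, decide reachability by intersecting with the target's residues, and output the union of the reachable cosets as the $\ints$-semi-linear inductive invariant. The only difference is cosmetic: you obtain the modulus as $N = \abs{\det P}$ via Lagrange's theorem in $\ints^d/\Lambda$, whereas the paper's Lemma~\ref{lemma:hybridifyz} takes $m$ to be the least common multiple of the denominators of $P^{-1}$ --- and by Cramer's rule $NP^{-1}$ is integral, so your $N$ is just one admissible choice of the paper's $m$.
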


\cref{thm:ztargets} requires the targets to be \textit{full-dimensional}. For nondeterministic systems reachability is undecidable for non-full-dimensional targets (in particular point targets)~\cite{markov1947certain}. However, even for deterministic systems, when $\ints$-linear targets fail to be \textit{full-dimensional} the reachability problem becomes as hard as the Skolem problem (see, e.g.~\cite{ouaknine2012decision}), for example by choosing as target the set $\set{(0,x_2,\dots,x_d) \mid x_2,\dots,x_d \in \mathbb{Z}}= \set{\vec{0} +e_2\ints +\dots +  e_d\ints}$, where $e_i \in \set{0,1}^d$ is the standard basis vector, with $(e_i)_i = 1$ and $(e_i)_j= 0$ for $i\ne j$.

Towards proving \cref{thm:ztargets}, we first show that \textit{full-dimensional} linear sets can be expressed as `square' hybrid-linear sets. Hybrid-linear sets are semi-linear sets in which all the components share the same period vectors, and thus differ only in starting position (whereas semi-linear sets allow each component to have distinct period vectors). By square, we mean that all period vectors are the same multiple of standard basis vectors.

\begin{lemma}\label{lemma:hybridifyz}
Let $Y = \set{x + p_1\ints + \dots + p_d\ints}$ be a \textit{full-dimensional} $\ints$-linear set.
Then there exists $m\in\naturals{}$ and a finite set $B\subseteq [0,m-1]^d$ such that $Y =$\linebreak $\bigcup_{b\in B} \set{b + me_1\ints +\dots +me_d\ints}$.
\end{lemma}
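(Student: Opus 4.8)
The plan is to exploit the fact that a \textit{full-dimensional} $\ints$-linear set is, up to translation, a full-rank sublattice of $\ints^d$, and that every such sublattice contains a scaled copy $m\ints^d$ of the standard lattice for a suitable $m$. Concretely, write $P = [p_1 \mid \dots \mid p_d]$ for the integer matrix whose columns are the period vectors, and $\Lambda = P\ints^d = p_1\ints + \dots + p_d\ints$ for the associated lattice, so that $Y = x + \Lambda$. Full-dimensionality says that the affine span of $Y$ is all of $\reals^d$, which forces $p_1,\dots,p_d$ to be linearly independent, i.e.\ $\det P \ne 0$. Hence $\Lambda$ is a sublattice of $\ints^d$ of finite index $m := \abs{\det P}$.

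The key step is to show that $m\ints^d \subseteq \Lambda$ for this choice of $m$. Since $\Lambda \subseteq \ints^d$ has index $m$, the quotient $\ints^d/\Lambda$ is a finite abelian group of order $m$. By Lagrange's theorem every element of this group is annihilated by $m$, i.e.\ $mv \in \Lambda$ for all $v \in \ints^d$; in particular $me_i \in \Lambda$ for each standard basis vector $e_i$, giving $m\ints^d = me_1\ints + \dots + me_d\ints \subseteq \Lambda$. This inclusion is the crux of the argument, and is exactly what lets us re-express $\Lambda$ using the square periods $me_1,\dots,me_d$.

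With $m\ints^d \subseteq \Lambda \subseteq \ints^d$ in hand, $\Lambda$ is a finite union of cosets of $m\ints^d$: writing $C$ for a set of coset representatives of $m\ints^d$ in $\Lambda$ (a finite set, since $[\ints^d : m\ints^d] = m^d$), we have $\Lambda = \bigcup_{c \in C}(c + m\ints^d)$. Translating by $x$ gives $Y = \bigcup_{c\in C}(x + c + m\ints^d)$. Finally, reducing each $x+c$ coordinate-wise modulo $m$ produces a representative $b \in [0,m-1]^d$ with $x + c + m\ints^d = b + m\ints^d$, because $m\ints^d$ absorbs any coordinate adjustment by a multiple of $m$. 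Collecting these representatives into $B \subseteq [0,m-1]^d$, which is finite, yields $Y = \bigcup_{b\in B}\set{b + me_1\ints + \dots + me_d\ints}$, as required.

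I expect the only real content to be the inclusion $m\ints^d \subseteq \Lambda$; everything after it is bookkeeping (passing to coset representatives and normalising them into $[0,m-1]^d$). A minor point to get right is the justification that full-dimensionality yields linear independence of the $p_i$ and hence $\det P \ne 0$; once the lattice is full-rank, its finite index supplies the annihilating multiple $m$.
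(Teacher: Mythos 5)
Your proof is correct, and it shares the paper's overall skeleton: both arguments pivot on exhibiting an $m$ for which $me_1,\dots,me_d$ lie in the lattice $\Lambda = p_1\ints + \dots + p_d\ints$, and then decomposing $Y = x + \Lambda$ into cosets of $m\ints^d$ with representatives normalised into $[0,m-1]^d$. The difference lies in how the crux inclusion $m\ints^d \subseteq \Lambda$ is established. The paper argues via explicit linear algebra: each $e_i$ is a rational combination of the $p_j$ (using $P^{-1}$), one clears denominators to get $m_i e_i \in \Lambda$, sets $m = \lcm\set{m_1,\dots,m_d}$, and then invokes \cref{prop:canadddir} to adjoin the new periods without altering $Y$. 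You instead argue structurally: $\Lambda$ has finite index $m = \abs{\det P}$ in $\ints^d$, so by Lagrange's theorem the quotient $\ints^d/\Lambda$ is annihilated by $m$. Your route yields an explicit, canonical modulus and sidesteps \cref{prop:canadddir} entirely, since the exact coset decomposition $\Lambda = \bigcup_{c\in C}(c + m\ints^d)$ delivers both inclusions of the final equality at once; the paper's route stays within elementary matrix manipulation and can produce a smaller $m$ (by Cramer's rule, the lcm of the denominators of $P^{-1}$ divides $\abs{\det P}$, so your $m$ is always valid but possibly larger), which is relevant for the algorithm in \cref{thm:ztargets}, whose saturation runs over $[0,m-1]^d$ and hence scales with $m^d$.
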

\begin{proof}
Suppose $p_1,\dots,p_d$ span a $d$-dimensional vector space. Let $P = \left(\begin{smallmatrix} p_1\\\vdots\\p_d\end{smallmatrix}\right)$ be the matrix with rows $p_1,\dots,p_d$. Since $P$ is full row rank  it is invertible, hence there exists a rational matrix $P^{-1}$ such that $e_i = P^{-1}_{i,1}p_1+\dots +P^{-1}_{i,d}p_d$. In particular let $m_i$ be such that $P^{-1}_{i,j}m_i$ is integral for all $j$. Then there is an integral combination of $p_1,\dots,p_d$ such that $m_ie_i$ is an admissible direction in $Y$.

Let $m = \lcm\set{m_1,\dots,m_d}$. Then $me_i$ is an admissible direction in $Y$.
Hence by \cref{prop:canadddir}, $Y$ is equivalent to $\set{x + p_1\ints + \dots + p_d\ints + me_1\ints+\dots+me_d\ints}$.
By the presence of $me_1\ints+\dots+me_d\ints$ we have that $x\in Y$ if and only $x' \in Y$ where $x'_i = (x_i \mod m)$.

And therefore $Y$ can be written as $\bigcup_{b\in B}\set{b + me_1\ints+\dots+me_d\ints}$, where
$B = [0,m-1]^d \cap Y$.
\qed
\end{proof}

We now prove \cref{thm:ztargets}.
\begin{proof}[Proof of \cref{thm:ztargets}]
Choose $m$ and $B$ as in \cref{lemma:hybridifyz}, so that $Y$ is of the form $\bigcup_{b\in B} \set{b + me_1\ints +\dots +me_d\ints}$.  We build an invariant $I$ of the form $\bigcup_{b\in B'} \set{b + me_1\ints +\dots +me_d\ints}$
for some $B'\subseteq [0,m-1]^d$.

We initialise the set $I_0=\set{x + me_1\ints +\dots +me_d\ints}$, where $x\in [0,m-1]^d$ such
that $x_j = (\vecit{x}{0}_j \mod m)$. 
We then build the set $I_1$ by adding to $I_0$ the sets $\set{y + me_1\ints +\dots +me_d\ints}$ 
where for each choice of $M_i$, $y\in[0,m-1]^d$ is formed by $y_j =( (M_ix)_j \mod m)$ for some 
$x\in I_0$. We iterate this construction until it stabilises in an inductive invariant $I$. Termination follows from the finiteness of $[0,m-1]^d$ (noting in particular that if termination occurs with $B' = [0,m-1]^d$, then $I = \ints^d$ which is indeed an inductive invariant).

If there exists $y\in B\cap I$
then return \textsc{Reachable}. This is because the same sequence of matrices applied to 
$\vecit{x}{0}$ to produce $y\in I$ would, thanks to the modulo step, wind up inside the set 
$\set{y + me_1\ints +\dots +me_d\ints}$, which is a part of the target.

Otherwise, return \textsc{Unreachable} and $I$ as invariant. By construction, $I$ is indeed an inductive invariant disjoint from the target set. \qed
\end{proof}

\begin{remark}
By the same argument, \cref{thm:ztargets} extends to a restricted class of $\ints$-semi-linear targets: the finite union of \textit{full-dimensional} $\ints$-linear sets.
\end{remark}

\section{\texorpdfstring{$\naturals$}{N}-semi-linear Invariants}
We now consider $\naturals{}$-semi-linear invariants, our most general class.  $\N$-semi-linear invariants gain expressivity thanks to the `directions' provided by the period vectors. For example, the only possible $\ints$-semi-linear invariant for the LDS $(0,(x\mapsto x+1))$ is $\ints$, yet the reachability set, $\N$, is captured exactly by an $\N$-linear  invariant. We show that a separating $\naturals$-semi-linear invariant can \textit{always} be found for unreachable instances of deterministic integer LDS, although the computed invariant will depend on the target. However, finding invariants is undecidable for nondeterministic systems, at least in high dimension. Nevertheless, we show decidability for the low-dimensional setting of the MU Puzzle---one dimension with affine updates.

\subsection{Existence of sufficient (but non-minimal) \texorpdfstring{$\naturals$}{N}-semi-linear invariants for point reachability in deterministic LDS}

Kannan and Lipton showed decidability of reachability of a point target for deterministic LDS~\cite{KannanL86}. In this subsection, we establish the following result to provide a separating invariant in unreachability instances.

\begin{theorem}
\label{thm:nsemi-linear}
Given a deterministic LDS $(x^{(0)},M)$ together with a point target $y$, if the target is unreachable then a separating $\N$-semi-linear inductive invariant can be provided.
\end{theorem}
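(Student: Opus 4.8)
The plan is to separate $y$ from the orbit $\orbit = \set{\vecit{x}{i}\mid i\ge 0}$ (where $\vecit{x}{i} = M^i\vecit{x}{0}$) by peeling off the cases already settled by coarser invariants before confronting the genuinely difficult one. First I would invoke the strongest $\ints$-linear invariant $L$ of \cref{thm:smallestZlinear}: since every $\ints$-linear set is $\naturals$-semi-linear (writing $\set{p\ints}=\set{p\naturals + (-p)\naturals}$), if $y\notin L$ then $L$ already separates and we are done. So I may assume $y\in L$, at which point the lattice structure is useless and the only remaining leverage is the \emph{order} available in Presburger arithmetic. I would then split on whether $\orbit$ is finite: if it is, then $\orbit$ is a finite union of singletons, hence $\naturals$-semi-linear, it is inductive since $M\orbit\subseteq\orbit$, and it excludes $y$ by unreachability.

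The remaining case is an infinite orbit, and here the key preliminary observation is that $\norm{\vecit{x}{i}}_\infty\to\infty$. Indeed, if some box $[-R,R]^d$ contained infinitely many orbit points, pigeonhole would give $\vecit{x}{i}=\vecit{x}{j}$ with $i<j$, forcing the deterministic orbit to be eventually periodic and hence finite, a contradiction. Thus there is a computable $N$ with $\norm{\vecit{x}{i}}_\infty>\norm{y}_\infty$ for all $i\ge N$, which suggests an invariant of the shape $\set{\vecit{x}{0},\dots,\vecit{x}{N-1}}\cup T$, where $T$ is a Presburger \emph{trap} containing the tail $\set{\vecit{x}{i}\mid i\ge N}$, satisfying $M(T)\subseteq T$, and avoiding $y$.

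The crux is to make $T$ simultaneously inductive and $y$-avoiding. A purely \emph{modular} trap is easy to make inductive but may fail to exclude $y$: fixing $m$ and letting $S$ be the recurrent residue set $\set{\vecit{x}{i}\bmod m\mid i\ge N}$, the set $\set{z\mid (z\bmod m)\in S}$ is $M$-invariant (as $M$ permutes recurrent residues) and $\ints$-semi-linear, but excludes $y$ only when $(y\bmod m)\notin S$. Conversely, a purely \emph{metric} trap such as $\set{z\mid \norm{z}_\infty>\norm{y}_\infty}$ does contain the tail and exclude $y$, but is not $M$-invariant, since $M$ can contract norms. To reconcile the two I would seek an integer linear functional $\ell$ that is non-contracting along the orbit's growth direction, so that a thresholded half-space $\set{z\mid \ell(z)\ge t}$ is forward-invariant and can be placed with $t>\ell(y)$; restricting to the dominant eigendirection via the real Jordan form of $M$ yields such an $\ell$ together with a forward-invariant rational polyhedral cone into which the tail eventually enters, and intersecting with the modular trap restores the congruence information.

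I expect the main obstacle to be the case of \textbf{complex dominant eigenvalues}. Then the orbit spirals outward, no single linear functional increases monotonically, and the natural separating quantity---the norm in the invariant Hermitian inner product---is \emph{quadratic} and hence not Presburger-definable. My plan to overcome this is to pass to a suitable power $M^q$ (equivalently, to partition the time index into residue classes modulo the period $q$ of the recurrent dynamics modulo $m$), so that within each class the growth aligns with a fixed real direction; this produces finitely many forward-invariant polyhedral cones whose union, combined with the congruence constraints securing inductivity and the magnitude thresholds excluding $y$, furnishes the required $\naturals$-semi-linear invariant. The remaining verification---that the finite prefix flows into this union, that the union is closed under $M$, and that $y$ lies in none of the cones (excluded either by its residue or by its bounded magnitude)---should then be routine.
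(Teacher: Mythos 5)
There is a genuine gap, and it sits exactly where you flagged it: the complex dominant eigenvalue case. Your repair---pass to a power $M^q$ (or to the period $q$ of the recurrent residues modulo $m$) so that ``within each class the growth aligns with a fixed real direction''---only works when the argument $\theta$ of the dominant eigenvalue is a \emph{rational} multiple of $2\pi$, and for integer matrices it typically is not. For instance $M = \left(\begin{smallmatrix}1 & -2\\ 1 & 1\end{smallmatrix}\right)$ has eigenvalues $1 \pm i\sqrt{2}$, whose argument is an irrational multiple of $2\pi$; then the angles $i\theta \bmod 2\pi$ are dense, the normalised orbit directions $\vecit{x}{i}/\norm{\vecit{x}{i}}$ are dense in a closed curve, and no power of $M$ has a real dominant eigenvalue. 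In particular no finite family of cones, each individually forward-invariant, can capture the tail: $M$ rotates every cone off itself. The period of the dynamics modulo $m$ is arithmetic data and has no bearing on $\theta$, so partitioning time by residue classes does not restore alignment. A secondary soft spot in the same construction: you ask for an \emph{integer} linear functional $\ell$ that is non-contracting along the growth direction, but the dominant left eigenvector is in general irrational, and replacing it by a rational approximation destroys global forward-invariance of the half-space $\set{z \mid \ell(z)\ge t}$ (points far from the orbit but inside the half-space can be pushed backwards). Since Presburger definability forces rational data, this needs a real argument, not a perturbation.

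The paper avoids all of this by reduction: it invokes the polytopic-invariant constructions of Fijalkow et al.~\cite{fijalkow2019monniaux}, which handle the irrational spiral by approximating the invariant (quadratic, non-Presburger) norm with a polyhedral norm having sufficiently many facets relative to the expansion factor $|\lambda|>1$, so that the \emph{complement of a polytope} is mapped into itself---invariance holds for the finite union of half-spaces jointly, never facet by facet, which is precisely the mechanism your cone decomposition is missing. The resulting $\rp$-semi-linear invariant has rational periods and integer base points, and \cref{lem:RtoN} then extracts its integer points as an $\N$-semi-linear set. The one case where that cited construction fails (all eigenvalues of modulus $0$ or $1$ with some non-root-of-unity on the unit circle) is ruled out for integer matrices by Kronecker's theorem---a step your outline would also eventually need, e.g.\ to handle infinite orbits of polynomial growth (unit eigenvalues with nontrivial Jordan blocks), where the left eigenvector gives $\ell(Mz)=\ell(z)$ and your thresholded functional does not grow at all. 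Your opening reductions (peeling off the strongest $\ints$-linear invariant, finite orbits, and the pigeonhole argument that an infinite integer orbit has $\norm{\vecit{x}{i}}_\infty\to\infty$) are all sound, but the trap construction at the heart of the proof does not go through as proposed.
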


To do so, we will invoke the results from~\cite{fijalkow2019monniaux} to compute an
$\mathbb{R}_+$-semi-linear inductive invariant, and then extract from it
an $\N$-semi-linear inductive invariant.
More precisely, the authors of~\cite{fijalkow2019monniaux} show how to build polytopic inductive invariants for certain deterministic LDS\@.  Such polytopes are either bounded or
are $\rp$-semi-linear sets. In the first case, the polytope contains only finitely many
integral points, which can directly be represented via an $\N$-semi-linear set. In the second case, 
we build an $\N$-semi-linear set containing exactly the set of integral points included in the 
$\rp$-semi-linear invariant, thanks to the following lemma.

\begin{restatable}{lemma}{rton}
\label{lem:RtoN}
Given an $\mathbb{R}_+$-linear set $S= \set{x + \sum_{i} p_i\mathbb{R}_+}$, 
where the vectors $p_i$ have rational coefficients and $x$ is an integer vector, 
one can build an $\N$-semi-linear set $N$ comprising precisely all of the integral 
points of $S$.
\end{restatable}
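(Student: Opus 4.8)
The plan is to reduce the problem to the classical description of the integral points of a rational simplicial cone via its fundamental parallelepiped, and then assemble the general case by a conic decomposition. The first step is to clear denominators: since $p_i\rp = (d\, p_i)\rp$ for any positive integer $d$, scaling each generator by a common denominator replaces the rational $p_i$ by integer vectors $q_i\in\Z^d$ without changing the cone $C := \sum_i p_i\rp = \sum_i q_i\rp$. Because $x$ is an integer vector, a point $z$ is an integral point of $S = x + C$ iff $z-x$ is an integral point of $C$; hence it suffices to describe the integral points of $C$ as an $\N$-semi-linear set and then translate every base vector by $x$ (translation by an integer vector clearly preserves $\N$-semi-linearity).

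Next I would reduce to the simplicial case. By Carath\'eodory's theorem for conic combinations, every point of $C$ is a nonnegative combination of a \emph{linearly independent} subset of $\set{q_1,\dots,q_k}$, so $C = \bigcup_J C_J$, where $J$ ranges over the finitely many linearly independent subsets of indices and $C_J := \sum_{i\in J} q_i\rp$. Since the integral points of $C$ are exactly the union of the integral points of the $C_J$ (the $C_J$ cover $C$, and each $C_J\subseteq C$), it is enough to treat each simplicial cone $C_J$ separately; overlaps between the $C_J$ on lower-dimensional faces are harmless, as we only take a union.

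For a simplicial cone $C_J$ with linearly independent integer generators $\set{q_i : i\in J}$, I would use the fundamental-parallelepiped decomposition. Writing any $z\in C_J$ uniquely as $z=\sum_{i\in J} t_i q_i$ with $t_i\ge 0$ and splitting $t_i = \lfloor t_i\rfloor + \{t_i\}$, the point $z - \sum_{i\in J}\lfloor t_i\rfloor q_i = \sum_{i\in J}\{t_i\}q_i$ lies in the half-open parallelepiped $\Pi_J := \set{\sum_{i\in J} s_i q_i \mid 0\le s_i < 1}$ and is integral whenever $z$ is (because the $q_i$ are integral). This yields the exact description
\[
C_J \cap \Z^d \;=\; \bigcup_{v\,\in\,\Pi_J\cap\Z^d} \set{v + \textstyle\sum_{i\in J} q_i\N},
\]
an $\N$-semi-linear set whose base points range over the finite, effectively enumerable set $\Pi_J\cap\Z^d$ (finite because $\Pi_J$ is bounded). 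Taking the union over all $J$ and translating by $x$ produces the desired $N$.

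I expect the point requiring the most care to be the verification that the parallelepiped description captures the integral points \emph{exactly}: both that every integral point of $C_J$ decomposes as a base point in $\Pi_J\cap\Z^d$ plus an $\N$-combination of the $q_i$ (which relies on the uniqueness of coordinates granted by linear independence, together with integrality of the $q_i$ so that the fractional remainder is itself integral), and conversely that all such sums are integral and lie in $C_J$. A secondary point is to confirm the non-simplicial case genuinely reduces to the simplicial one via Carath\'eodory. As a sanity check one may also argue non-constructively that $S$ is a rational polyhedron (Minkowski--Weyl), hence definable in Presburger arithmetic once restricted to $\Z^d$ and therefore $\N$-semi-linear; but the explicit parallelepiped construction is preferable here, as it directly produces the generators of $N$.
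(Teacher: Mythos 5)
Your proof is correct, but it follows a genuinely different route from the paper's. The paper also begins by clearing denominators (choosing $k\in\N$ with $kp_i\in\Z^d$), but then performs no simplicial decomposition at all: it builds a \emph{single} $\N$-linear set $T=\set{x+\sum_i v_i\N}$ whose period vectors $v_i$ range over all integer vectors expressible as $\sum_i s_i\,kp_i$ with $s_i\in[0,1]$, i.e.\ the lattice points of the zonotope spanned by the $kp_i$ (the paper loosely calls these ``convex combinations''). The same floor/fractional splitting you use, $t_i=\lfloor t_i\rfloor+\{t_i\}$, applied with respect to \emph{all} generators simultaneously rather than an independent subset, shows that any integer $y\in S$ equals $x+v+\sum_i m_i\,kp_i$ with $m_i\in\N$ and $v$ an integer zonotope point; since each $kp_i$ is itself such a point, $y\in T$. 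What each approach buys: the paper's construction is shorter and yields one linear set with base $x$, but its periods are dependent and potentially very numerous (all lattice points of a bounded zonotope); yours, via conic Carath\'eodory, yields a union of components each having at most $d$ linearly independent periods, so every component has unambiguous coordinates and conforms to the paper's definition of a linear set with at most $d$ period vectors, at the cost of a union over the (possibly exponentially many) linearly independent subsets $J$ together with a lattice-point enumeration per parallelepiped; both constructions are effective. Your insistence on verifying exactness in the half-open parallelepiped step is well placed: that is precisely where the paper's wording is imprecise, since the fractional remainder $\sum_i\{t_i\}\,kp_i$ has coefficients in $[0,1)$ whose sum can be anything up to the number of generators, so it lies in the zonotope rather than the convex hull of the $kp_i$; your formulation avoids that slip.
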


\begin{proof}[Proof of Theorem~\ref{thm:nsemi-linear}]
We note that every invariant produced in~\cite{fijalkow2019monniaux} has rational
period vectors, as the vectors are given by the difference of successive point in the orbit of the system, 
and thus Lemma~\ref{lem:RtoN} can be applied.
The authors of~\cite{fijalkow2019monniaux} build an inductive invariant in all cases except those for which every eigenvalue of the matrix governing the evolution of the LDS is either 0 or of modulus 1 and at least one of the latter is not a root of unity.  This situation however cannot occur in our setting. Indeed, the eigenvalues of an integer matrix are algebraic integers, and an old result of Kronecker~\cite{Kro57} asserts that unless all of the eigenvalues are roots of unity, one of them must have modulus strictly greater than $1$ (the case in which \emph{all} eigenvalues are $0$ being of course trivial).

This concludes the proof of Theorem~\ref{thm:nsemi-linear}.\qed
\end{proof}

\subsection{Undecidability of \texorpdfstring{$\naturals$}{N}-semi-linear invariants for nondeterministic LDS}

If the enhanced expressivity of $\N$-semi-linear sets allows us always to find an invariant for deterministic LDS, it contributes in turn to making the invariant-synthesis problem undecidable when the LDS is not deterministic. We establish this %
through a reduction from the infinite Post correspondence problem ($\omega$-PCP) that can be defined in the following way: given $m$ pairs of non-empty words $\{(u^{1},v^{1}),\dots, (u^{m},v^{m})\}$ over alphabet $\{0,2\}$, does there exist an infinite word $w=w_1w_2\dots$ over alphabet $\{1,\dots,m\}$ such that $u^{w_1}u^{w_2}\ldots = v^{w_1}v^{w_2}\ldots$.  This problem is known to be undecidable when $m$ is at least $8$~\cite{HH06,DL12}.

\begin{restatable}{theorem}{thmundec}\label{thm:undec}
The invariant synthesis problem for $\mathbb{N}$-semi-linear sets and linear dynamical systems with at least two matrices of size $91$ is undecidable.
\end{restatable}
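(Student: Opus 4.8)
The plan is to reduce the $\omega$-PCP to the invariant-synthesis problem, encoding an infinite solution word as an infinite orbit of a nondeterministic LDS and arranging matters so that a separating $\naturals$-semi-linear invariant exists if and only if \emph{no} infinite solution exists. First I would set up the encoding of words over $\{0,2\}$ as integers: reading a word $w_1\cdots w_\ell$ as a base-$3$ (or base-$b$ for suitable $b$) numeral gives a natural number, and appending a letter $c$ to the right of a word encoded as $n$ corresponds to the affine update $n \mapsto b^{|c|}\, n + \enc{c}$, where $\enc{c}$ is the numeral of $c$ and $|c|$ its length. Concatenating a whole block $u^{i}$ (respectively $v^{i}$) is then realised by composing these letter-updates, yielding a single affine map $n \mapsto \alpha_i n + \beta_i$ for each pair index $i$. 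I would maintain two such counters in parallel, one tracking the $u$-side concatenation and one the $v$-side, and a nondeterministic choice of matrix at each step would correspond to choosing the next index $w_t \in \{1,\dots,m\}$, applying the $i$-th affine map simultaneously to both counters.

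The crucial design point is to make equality of the two infinite concatenations detectable through a reachability/invariant condition. Since the two sides may be of different lengths at intermediate stages, a raw comparison of the two counters does not work; the standard device (following the $\omega$-PCP undecidability constructions) is to track instead the \emph{difference} or the \emph{mismatch} between the two strings, using extra coordinates that record the ``overhang'' of one side over the other as a pending suffix. I would therefore augment the state with coordinates recording how far ahead the $u$-side is relative to the $v$-side (encoded again as a base-$b$ integer, with a sign bit or a pair of nonnegative registers), updated affinely at each step. An infinite solution corresponds precisely to an orbit along which this overhang can be kept consistent forever, i.e.\ the two sides never contradict each other; conversely, any finite partial word that forces a genuine letter mismatch drives the overhang coordinate to a ``bad'' configuration.

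Having built the LDS, I would phrase the non-existence of an infinite solution as the existence of a separating $\naturals$-semi-linear invariant, and the existence of a solution as its non-existence. Concretely, when no infinite solution exists, König's lemma implies there is a uniform bound beyond which every branch has hit a mismatch, so the reachable set is contained in a semi-linear over-approximation avoiding a designated target region; the finiteness of the ``live'' prefix tree lets one write this over-approximation explicitly, giving an inductive $\naturals$-semi-linear invariant separating the target. When an infinite solution \emph{does} exist, the corresponding orbit enters every such candidate invariant's complement, so no separating semi-linear set can contain the whole reachable set while avoiding the target. The dimension $91$ then comes from the $m=8$ bound for undecidable $\omega$-PCP over a two-letter alphabet together with the bookkeeping coordinates (two side-counters, overhang registers, a constant/homogenising coordinate for the affine-to-linear lifting described in the MU Puzzle footnote, and control flags), so I would track the coordinate count carefully to land at $91$ matrices of size $91$.

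The hard part will be twofold. First, getting the overhang-tracking gadget exactly right so that equality of the two \emph{infinite} streams is faithfully captured: I must ensure that the affine updates never lose information (injectivity of the encoding), that a genuine letter mismatch is irrecoverably flagged, and that the $\omega$-semantics — an infinite consistent run — matches ``the target is \emph{not} separable.'' Second, the dimension accounting: squeezing the construction into size-$91$ matrices requires reusing coordinates economically and verifying that each affine map lifts to an integer linear map of the stated size, which is where I expect the bulk of the technical verification (rather than conceptual difficulty) to lie.
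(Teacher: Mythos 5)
Your overall frame (reduction from $\omega$-PCP, encoding tile concatenation as affine updates on integer registers, K\"onig's lemma plus an explicit inductive invariant in the no-solution case) matches the paper, but there is a genuine gap at the heart of the solution-exists direction. You write that when an infinite solution exists ``the corresponding orbit enters every such candidate invariant's complement''---this cannot happen: by definition every inductive invariant contains the whole reachable set, so the orbit itself never witnesses anything against a candidate invariant. Moreover, in an $\omega$-PCP instance no finite prefix ever certifies the infinite equality, so under any natural encoding the target remains \emph{unreachable} even when a solution exists; your reduction therefore needs an argument for why no $\naturals$-semi-linear \emph{over-approximation} of the orbit can avoid the target, and the mismatch/overhang registers alone do not supply one (note that unreachability by itself proves nothing here: for deterministic LDS, Theorem~\ref{thm:nsemi-linear} shows unreachable point targets always admit separating $\naturals$-semi-linear invariants). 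The paper closes exactly this gap with a dedicated gadget: two extra counters updated by $n \leftarrow n+k$, $k \leftarrow k+1$ during each simulation step, so that along the solution branch one reaches (after a transfer-and-increment phase zeroing the word registers) infinitely many points whose $(n,k)$-components satisfy a quadratic relation, with $n_j/k_j$ growing unboundedly. Any $\naturals$-semi-linear set containing infinitely many of these points must then have a linear component with a period vector of the form $(d,0)$, $d>0$, hence contains spurious points $(n_j+d,k_j)$; from such a point the reverse updates $\mathsf{Dec}$ and $\mathsf{Dec_k}$ reach the target, so by inductiveness \emph{every} invariant intersects the target. Without a device of this kind---some reachable family whose exact shape no semi-linear set can capture, arranged so that the unavoidable slack leads back to the target---your reduction does not distinguish the two cases.

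A secondary issue is the dimension accounting: the bound $91$ does not arise from counting coordinates up to $91$, and the statement concerns \emph{two} matrices of size $91$, not ``$91$ matrices of size $91$''. The paper builds $m+5=13$ matrices of size $7$ (five simulation registers $(s,c,d,n,k)$, one ordering coordinate $p$ enforcing the phase discipline $f_3^* f_2 f_1^*$, and one homogenising coordinate fixed to $1$), and then invokes the standard transformation of $m+5$ matrices of size $7$ into two matrices of size $7(m+5)=91$. Your plan would need both the ordering gadget and this many-matrices-to-two conversion to land on the stated bound.
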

\begin{proof}[Sketch]
We first establish the result in the case of several matrices in low dimension; this can then be transformed in a standard way to two larger matrices (of size $91$).

The proof is by reduction from the infinite Post correspondence problem.  Given an instance of this problem the pair of words corresponding to each sequence of tiles has an integer representation, using  base-4 encoding.
An important property of our encoding is that the operation of appending a new tile to an existing pair of words can be encoded by matrix multiplication.

Recall that if the instance of $\omega$-PCP is negative, then every generated pair of words will differ at some point.  Our encoding is such that this difference of letters creates a difference in their numerical encodings that can be identified with an $\mathbb{N}$-semi-linear invariant.  On the other hand, when there is a positive answer to the $\omega$-PCP instance, there can be no $\mathbb{N}$-semi-linear invariant.\qed \end{proof}

\subsection{Nondeterministic one-dimensional affine updates}
The previous section shows that point reachability for  nondeterministic LDS is undecidable once there sufficiently many dimensions, motivating an analysis at lower dimensions. The MU Puzzle requires a single dimension with affine updates (or equivalently two dimensions in matrix representation,
with the coordinate along the second dimension kept constant). We consider this one-dimensional affine-update case, and therefore, rather than taking matrices as input, we directly work with
affine functions of the form $f_i(x) = a_i x + b_i$.

\begin{theorem}\label{thm:1daffine}
Given $\vecit{x}{0},\target{}\in \ints$, along with a finite set of functions $\set{f_1,\dots,f_k}$ where $f_i(x) = a_ix+b_i$, $a_i,b_i \in\ints$ for $1 \le i \le k$, 
it is decidable whether $\target$ is reachable from $\vecit{x}{0}$. 

Moreover, when $y$ is unreachable, an $\naturals$-semi-linear separating inductive invariant can be algorithmically computed.
\end{theorem}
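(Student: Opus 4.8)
The plan is to reduce the one-dimensional affine reachability problem to an analysis of orbits modulo a carefully chosen integer, together with a tractable treatment of the `growing' directions induced by affine maps with $|a_i| \ge 2$. First I would separate the functions into two kinds: the \emph{counter-like} maps with $a_i = 1$ (which act as $x \mapsto x + b_i$), and the \emph{scaling} maps with $|a_i| \ne 1$ (and the degenerate $a_i = 0$ and $a_i = -1$ cases, which I would handle separately since they either collapse to constants or merely negate). The key structural observation, generalising the MU Puzzle analysis in \cref{mu:puzzle}, is that each affine map $f_i(x) = a_i x + b_i$ interacts well with congruences: for a suitable modulus $m$, the induced map on residues modulo $m$ is well-defined, and an $\naturals$-semi-linear invariant can be assembled from (i) a finite union of residue classes $\set{b + m\ints}$ closed under the induced residue dynamics, intersected with (ii) a half-line or sign constraint capturing the direction in which the orbit escapes to infinity.

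The core of the argument is to choose the modulus. I would take $m$ to be (a multiple of) $\lcm$ of quantities built from the $b_i$ and the multiplicative structure of the $a_i$, so that the target residue $(\target \bmod m)$ is separated from the reachable residues. Concretely, I expect to argue that either the orbit is eventually confined (up to sign and magnitude) to a semi-linear set avoiding $\target$, which one detects by computing the finite set of reachable residues modulo $m$ and checking that $(\target \bmod m)$ is not among them; or else $\target$ genuinely lies in the reachable residue set, in which case one must decide reachability by a bounded search combined with a monotonicity/growth argument showing that once the orbit exceeds $\abs{\target}$ in a suitable sense it can no longer return to $\target$. The separating invariant in the unreachable case is then $\set{x \mid (x \bmod m) \in S}$ for the computed set $S$ of reachable residues, possibly refined by a one-sided constraint $x \ge c$ or $x \le c$ (expressible via $\naturals$-linear pieces) to handle the finitely many small exceptional values that share a residue with $\target$.

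The main obstacle I anticipate is the interaction between the scaling maps and the additive maps when $\abs{a_i} \ge 2$: unlike the pure-counter case, the magnitude of points can grow geometrically, so a naive residue analysis modulo a single fixed $m$ need not be closed (the image $x \mapsto a_i x + b_i$ sends the class $b + m\ints$ to $a_i b + b_i + a_i m \ints$, which is only a sub-lattice of a residue class, not a full class, unless $\gcd(a_i, m)$ is controlled). I would address this by choosing $m$ coprime to all the $a_i$ where possible, or by working with a modulus of the form $m = \prod_i p_i^{e_i}$ and tracking how each scaling map permutes or contracts the residues; the MU example works precisely because $x \mapsto 2x$ acts invertibly modulo $3$. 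The remaining delicate point is establishing \emph{decidability} of the genuinely reachable case: here I would appeal to the fact that in one dimension the absolute value provides a natural bound, so that reachability of $\target$ can only occur within a finite prefix once one accounts for the finitely many fixed points and periodic points of the $f_i$, together with the sign behaviour governed by whether $a_i$ is positive or negative. Packaging the finite residue computation, the bounded search, and the growth bound into a single algorithm that also outputs the invariant is where the careful bookkeeping lies.
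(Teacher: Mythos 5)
Your proposed invariant shape (a finite union of residue classes, refined by one-sided constraints) matches the form the paper ultimately produces, and your counter/scaling decomposition parallels its case analysis; but two concrete points go wrong. First, the obstacle you anticipate---that a scaling map sends $\set{b + m\ints}$ only onto a sub-lattice, so $\gcd(a_i,m)$ must be controlled---is a red herring: an inductive invariant only needs $f_i(I) \subseteq I$, and $f(x + mk) = f(x) + a_i m k \equiv f(x) \bmod m$ always holds, so a union of \emph{full} residue classes modulo any $m$ is automatically mapped into residue classes; coprimality would matter only for exactness, not invariance. Conversely, the correct modulus is not built from the multiplicative structure of the $a_i$ at all: the paper takes the period $d$ from the \emph{additive} structure, namely the increment of a counter-like map $x \mapsto x + b$ (or $\gcd(b,c)$ for two opposing counters, \cref{lemma:twooppcounter}, noting also that two distinct pure inverters compose to opposing counters, \cref{wlog:twopureinvertertwooppcounters}). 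When no counters exist, no residue analysis helps; there the paper instead uses the growth bound of \cref{lemma:growthbound} to build $\set{-C - \naturals} \cup \set{C + \naturals}$ together with a finite, exhaustively searched core $\orbit \cap (-C,C)$, with $C$ depending on the target.

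The crux you are missing is deciding, for each reachable residue class, whether the corresponding invariant piece must be one-sided ($\set{x + d\naturals}$) or two-sided ($\set{x + d\ints}$). This dichotomy is where the paper's real work lies: an inverting function applied to an $\naturals$-linear set generates a full $\ints$-linear set (\cref{lemma:inverterzs}), and even without inverters, a sequence of non-inverting functions returning to the same residue class strictly farther in the direction \emph{opposing} the reference counter forces the class to become two-sided (\cref{lemma:inducezcycles}), while otherwise saturation via \cref{lemma:zsaturation} terminates with one-sided pieces. Your phrase ``possibly refined by a one-sided constraint $x \ge c$'' does not decide this, and without it the constructed set is either not inductive or not separating. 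Relatedly, your decidability argument for the reachable case---that once the orbit exceeds $\abs{\target}$ it cannot return---is false whenever counter-like maps are present: in the MU system itself, $x \mapsto x - 3$ brings arbitrarily large points back down. The paper needs no such bounded-search claim in the counter cases because there its invariant is \emph{exactly} the reachability set, so reachability reduces to membership in the invariant; a growth-and-escape argument is used only in the no-counter case, where it is sound precisely because every function increases the modulus outside $(-C,C)$.
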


We note that decidability of reachability is already known~\cite{FinkelGH13,Fremont2013}. We refine this result by exhibiting an invariant which can be used to disprove reachability. In fact our procedure will produce an $\mathbb{N}$-semi-linear set which can be used to decide reachability, and which, in instances of non-reachability, will be a separating inductive invariant. We have implemented this algorithm into our tool \porous, enabling us to efficiently tackle the MU Puzzle as well as its generalisation to arbitrary collections of one-dimensional affine functions. We report on our experiments  in \cref{sec:tool}.

We build a case distinction depending on the type of functions that appear:
\begin{definition} A function $f(x) = ax + b$...
\begin{itemize}
\item ... is \emph{redundant} if $f(x)= b$, (including possibly $b = 0$), or if $f(x) = x$.
\item ... is \emph{counter-like} if $f(x) =  x + b$, $b\ne 0$. Two counter-like functions, $f(x) = x + b$ and $g(x) = x + c$ are \emph{opposing} if $b > 0$ and $c < 0$ (or vice-versa).
\item ... is \emph{growing} if $f(x) =  ax + b$ and $|a| \ge 2$. 
We say a growing function is \emph{inverting} if $a \le -2$.
\item ... is \emph{pure inverting} if $f(x) =  -x + b$.
\end{itemize}

\end{definition}

\subsubsection{Simplifying assumptions}
\begin{lemma}
  Without loss of generality, redundant functions are redundant; more precisely, we can reduce the computation of an invariant for a system having redundant functions to finitely many invariant computations for systems devoid of such functions.
\end{lemma}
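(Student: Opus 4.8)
The plan is to treat the two kinds of redundant function separately. An identity update $f(x)=x$ can simply be deleted: it contributes no new reachable point, and for \emph{any} candidate set $I$ one trivially has $f(I)=I\subseteq I$, so $I$ is inductive for the original system exactly when it is inductive for the system with the identity removed. The genuinely interesting case is a constant update $f(x)=b$. The guiding observation is that such a function, once applied, erases all information about the current state, so its only effect on reachability is to make $b$ available as a fresh \emph{source} point. Concretely, writing $G$ for the sub-collection of non-redundant functions and $C=\set{b : f(x)=b \text{ is a constant update of the system}}$ for the set of constant values, I claim that $\orbit$ equals the set of points reachable from the finite source set $\set{\vecit{x}{0}}\cup C$ using only the functions in $G$.

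The key step is to establish this orbit decomposition. For the forward inclusion, take any word of updates producing a reachable point, delete all identity steps (they change nothing), and examine the \emph{last} constant step, if any: that step lands the computation at some $b\in C$, and everything afterwards uses only functions of $G$, so the endpoint is reachable from $b$ under $G$; and if no constant step occurs, the endpoint is reachable from $\vecit{x}{0}$ under $G$. The reverse inclusion is immediate, since every element of $\set{\vecit{x}{0}}\cup C$ is itself reachable in the original system (each $b\in C$ in one step via the corresponding constant function) and $G\subseteq\set{f_1,\dots,f_k}$. In particular $\target$ is unreachable in the original system precisely when it is unreachable from \emph{every} source in $\set{\vecit{x}{0}}\cup C$ under $G$; hence each of the $1+\abs{C}\le k$ resulting subproblems is again an unreachability instance, now over a system devoid of redundant functions.

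It then remains to reassemble the invariants. Applying the (redundant-free) synthesis procedure to each source $z\in\set{\vecit{x}{0}}\cup C$ yields an $\naturals$-semi-linear inductive invariant $I_z$ for the system $(z,G)$ with $z\in I_z$ and $\target\notin I_z$, and I would set $I=\bigcup_z I_z$. Since $\naturals$-semi-linear sets are closed under finite union, $I$ is again $\naturals$-semi-linear, it contains $\vecit{x}{0}$, and it avoids $\target$. Inductiveness for the \emph{full} system is checked function by function: for $f\in G$ we have $f(I)=\bigcup_z f(I_z)\subseteq\bigcup_z I_z=I$; for an identity function $f(I)=I$; and for a constant function $f(x)=b$ we have $f(I)=\set{b}\subseteq I$ precisely because $b\in C$, so that $b\in I_b\subseteq I$. (When $G$ is empty the orbit $\set{\vecit{x}{0}}\cup C$ is finite and is itself the separating invariant.)

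The main obstacle is not any single computation but getting the bookkeeping of the reduction exactly right: one must seed a \emph{separate} subproblem from each constant value $b$, as it is precisely this seeding that forces $b\in I$ and thereby keeps the union $I$ inductive under the corresponding constant update; and one must confirm that unreachability is preserved when passing to the several subproblems. Once the orbit decomposition is in place these verifications are routine, the branching remains linear in $k$, and we obtain the promised reduction to finitely many redundant-free invariant computations.
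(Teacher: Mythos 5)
Your proof is correct and takes essentially the same route as the paper's: delete identity updates outright, treat each constant value $b$ as an alternative starting point, solve the finitely many resulting redundant-free instances, and take the union of the computed invariants. You merely spell out details the paper leaves implicit (the last-constant-step orbit decomposition and the explicit check that a constant update $f(x)=b$ maps the union into itself because $b$ lies in the invariant seeded at $b$).
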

\begin{proof} Clearly the identity function has no impact on the reachability set, and so can be removed outright. For any other redundant function, its impact on the reachability set does not depend on when the function is used, and we may therefore assume that it was used in the first step, or equivalently, using an alternative starting point. Hence the invariant-computation problem can be reduced to finitely many instances of the problem over different starting points, with redundant functions removed.
Finally, taking the union of the resulting invariants yields an invariant for the original system.
\qed \end{proof}

\begin{lemma}
Without loss of generality, $\vecit{x}{0} \ge 0$.
\end{lemma}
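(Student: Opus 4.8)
The plan is to reduce the case $\vecit{x}{0} < 0$ to the case $\vecit{x}{0} \ge 0$ by conjugating the entire system with the negation map $\phi(x) = -x$, which is an involution of $\ints$. First I would record how affine maps transform under this conjugation: if $f(x) = ax + b$, then $\phi \circ f \circ \phi^{-1}$ is the affine map $x \mapsto ax - b$, so negation preserves each multiplier $a_i$ and merely flips the sign of each offset $b_i$. Accordingly, from an instance $(\vecit{x}{0}, \target, \set{f_1,\dots,f_k})$ with $\vecit{x}{0} < 0$ I would form the new instance $(-\vecit{x}{0}, -\target, \set{\tilde{f}_1,\dots,\tilde{f}_k})$ where $\tilde{f}_i(x) = a_i x - b_i$; by construction its starting point $-\vecit{x}{0}$ is now strictly positive.

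Next I would verify that the two instances are equivalent for both reachability and invariant synthesis. Since $\phi$ is a bijection conjugating each $f_i$ to $\tilde{f}_i$, the reachability set of the new system is exactly $\phi(\orbit)$, so $\target$ is reachable in the original system if and only if $-\target$ is reachable in the new one. For invariants I would use that the class of $\naturals$-semi-linear sets is closed under negation: a linear set $\set{b + p_1\naturals + \dots + p_m\naturals}$ negates to $\set{-b + (-p_1)\naturals + \dots + (-p_m)\naturals}$, which is again $\naturals$-semi-linear because period vectors are permitted to be arbitrary integers even when the coefficient ring is $\naturals$. A short computation then shows that if $\tilde{I}$ is a separating inductive invariant for the new instance, then $-\tilde{I} = \set{-z \mid z \in \tilde{I}}$ is a separating inductive invariant for the original: inductivity transports because $f_i(-z) = -\tilde{f}_i(z)$, membership of the start transports because $-\vecit{x}{0} \in \tilde{I}$ gives $\vecit{x}{0} \in -\tilde{I}$, and disjointness transports because $\target \notin -\tilde{I}$ precisely when $-\target \notin \tilde{I}$.

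Putting these together, any procedure that decides reachability and computes an $\naturals$-semi-linear separating invariant for nonnegative starting points also handles negative starting points, after negating the instance, solving it, and negating the resulting invariant. I do not anticipate any genuine obstacle here; the only point requiring a moment's care is confirming that negation stays inside the $\naturals$-semi-linear class rather than leaving it, which holds exactly because the definition of a linear set (\cref{defn:semi-linear}) allows period vectors drawn from $\ints^d$ regardless of the coefficient ring.
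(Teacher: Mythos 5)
Your proof is correct and takes essentially the same approach as the paper, which likewise replaces each $f(x) = ax + b$ by $\overline{f}(x) = ax - b$ and observes that $\overline{f}(-x) = -f(x)$, so that $\vecit{x}{0}$ reaches $\target$ in the original system iff $-\vecit{x}{0}$ reaches $-\target$ in the new one. Your additional check that negation keeps invariants inside the $\naturals$-semi-linear class (because period vectors may be drawn from $\ints^d$) makes explicit a point the paper leaves implicit, but it does not change the underlying argument.
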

\begin{proof}
We construct a new system, where each transition $f(x) = ax +b$ is replaced by $\overline{f}(x) = ax - b$. Then $\vecit{x}{0}$ reaches $\target{}$ in the original system if and only if $-\vecit{x}{0}$ reaches $-\target$ in the new system. To see this, observe that if $f(x) = ax + b$, then $\overline{f}(-x) = -ax-b = - f(x)$.
\qed \end{proof}

\begin{lemma}\label{wlog:twopureinvertertwooppcounters}
  Suppose there are at least two distinct pure inverting functions (and possibly other types of functions). Then without loss of generality there are two opposing counters. 
\end{lemma}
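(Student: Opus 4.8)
The plan is to manufacture the two opposing counters by composing the two given pure inverting functions, and then to argue that adjoining these compositions to the system is harmless. First I would fix the two distinct pure inverting functions, writing them as $f(x) = -x + b$ and $g(x) = -x + c$; distinctness of the functions means precisely $b \neq c$. Composing them gives
\[
f(g(x)) = -(-x+c)+b = x + (b-c), \qquad g(f(x)) = -(-x+b)+c = x + (c-b).
\]
Both are counter-like, and since $b \neq c$ exactly one of $b-c$, $c-b$ is strictly positive while the other is strictly negative. Hence $f \circ g$ and $g \circ f$ are opposing counters in the sense of the definition.

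Next I would justify the ``without loss of generality'', i.e.\ that we may add $f \circ g$ and $g \circ f$ to the list of functions without changing the invariant-synthesis problem. The key observation is that the class of inductive invariants is closed under composition of the generating functions: if a set $I$ satisfies $f_i(I) \subseteq I$ for every $i$, then by an immediate induction every finite composition $f_{i_1}\circ\cdots\circ f_{i_m}$ also maps $I$ into $I$; in particular $f\circ g$ and $g\circ f$ do. Conversely, since the original function set is contained in the augmented one, every inductive invariant of the augmented system is already an inductive invariant of the original. Thus the two systems possess exactly the same inductive invariants, and in particular the same \emph{separating} ones. One must also check that the reachability set is unchanged: applying $f\circ g$ (respectively $g\circ f$) to a reachable point lands on a point that is already reachable in the original system by applying $g$ then $f$ (respectively $f$ then $g$), so $\orbit$ does not grow, and it obviously does not shrink since all original transitions are retained. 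Consequently $\target$ is reachable in one system iff in the other, and the separating invariants coincide.

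Putting these together, the system augmented with $f\circ g$ and $g\circ f$ has identical reachability behaviour and identical separating inductive invariants, yet provably contains two opposing counters, which is exactly the desired conclusion. The arithmetic of the composition is entirely routine; the only point that genuinely needs care is the closure argument spelling out that enriching the function list with compositions of existing functions preserves both the reachability relation and the collection of separating inductive invariants. Once that closure property is made explicit, the lemma follows at once.
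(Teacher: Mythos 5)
Your proof is correct and follows essentially the same route as the paper: the paper's argument is exactly the computation $f(g(x)) = x + (b-c)$ and $g(f(x)) = x + (c-b)$ with $b \neq c$ forcing the two composites to be opposing counters. Your additional closure argument---that adjoining compositions of existing functions changes neither the reachability set nor the collection of (separating) inductive invariants---is a sound and welcome elaboration of the ``without loss of generality'' that the paper leaves implicit.
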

\begin{proof}
Consider $f(x) = -x + b$, and $g(x) = -x + c$. Then $f(g(x)) = -(-x+c)+b = x +b-c$ and $g(f(x)) = -(-x+b)+c = x+c-b$. Since $b-c = -(c-b)$ and $b\neq c$ (as $f\neq g$) these two functions are opposing. 
\qed \end{proof}

\subsubsection{Two opposing counters.}

Let us first observe that when there are two opposing counters, we essentially move in either direction by some fixed amount. This will entail that only $\ints$-(semi)-linear invariants can be produced, rather than proper $\naturals{}$-(semi)-linear invariants.
\begin{restatable}{lemma}{twooppcounter}\label{lemma:twooppcounter}
Suppose there are two opposing counters, $f(x) = x + b$, and $g(x) = x - c$. Then for any reachable $x$ we have $\set{x + d\ints}\subseteq I$ for $d = \gcd(b,c)$.
\end{restatable}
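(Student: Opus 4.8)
The plan is to exploit the fact that $I$, being an inductive invariant, contains the reachability set and is closed under both $f$ and $g$. Since $x$ is reachable we have $x \in \orbit \subseteq I$, so it suffices to show that from $x$ one can reach every element of the coset $x + d\ints$ using only forward applications of the two maps. Applying $f$ exactly $m$ times and $g$ exactly $n$ times (the two maps commute, so the order is irrelevant) sends $x$ to $x + mb - nc$ and keeps us inside $I$. Thus the lemma reduces to the purely number-theoretic claim that
\[
\set{mb - nc \mid m, n \in \naturals} = d\ints, \qquad d = \gcd(b,c),
\]
after which $x + kd \in I$ for every $k \in \ints$, i.e. $\set{x + d\ints} \subseteq I$, follows immediately.

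For the number-theoretic claim, the inclusion $\subseteq$ is immediate since $d$ divides both $b$ and $c$. For the reverse inclusion, fix $k \in \ints$ and start from a B\'ezout identity $sb + tc = d$ with $s, t \in \ints$, so that $ksb + ktc = kd$. I would then use the degenerate relation $\tfrac{c}{d} b - \tfrac{b}{d} c = 0$ (the coefficients $\tfrac{c}{d}, \tfrac{b}{d}$ being integers) to shift the B\'ezout coefficients: for every $\ell \in \naturals$,
\[
kd = \br{ks + \ell \tfrac{c}{d}} b + \br{kt - \ell \tfrac{b}{d}} c.
\]
Here is exactly where \emph{opposing} is used: because $b, c > 0$, choosing $\ell$ large enough makes the coefficient of $b$ nonnegative and the coefficient of $c$ nonpositive simultaneously. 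Setting $m = ks + \ell\tfrac{c}{d} \geq 0$ and $n = \ell\tfrac{b}{d} - kt \geq 0$ then yields $mb - nc = kd$, establishing $d\ints \subseteq \set{mb - nc \mid m,n\in\naturals}$.

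The main obstacle — though a mild one — is precisely this reverse inclusion: one may only ever \emph{apply} the two functions, never invert them, so one must restrict to \emph{nonnegative} $m$ and $n$ and check that this still covers the whole subgroup $d\ints$ rather than some numerical-semigroup-like proper subset. The positivity of both $b$ and $c$ is what rescues us, allowing us to overshoot in either direction by padding with the trivial relation $cb - bc = 0$ until both coefficients acquire the correct sign. Combining the two steps gives $\set{x + d\ints} \subseteq I$ for every reachable $x$, as claimed.
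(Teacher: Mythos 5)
Your proof is correct, and its skeleton matches the paper's: since $x\in\orbit\subseteq I$ and $I$ is closed under $f$ and $g$, everything reduces to the number-theoretic fact that $\set{mb-nc\mid m,n\in\naturals}=d\ints$ for $d=\gcd(b,c)$, where the restriction to \emph{nonnegative} $m,n$ is the real content and the opposing signs are what prevent the set from degenerating into a numerical-semigroup-like proper subset — you identify this obstacle exactly. Where you diverge is in how this core fact is established. The paper factors out $d$, writes $b=kd$ and $c=\ell d$ with $k,\ell$ coprime, and invokes an auxiliary cycling lemma (\cref{lemma:coprimestuff}: $n\ell \bmod k$ runs through all residues) to find $n\in\naturals$ with $k\mid 1+n\ell$, hence a nonnegative solution $m=\tfrac{1+n\ell}{k}$ of $mk-n\ell=1$; it produces $+d$ and $-d$ separately and scales by $j$ to cover the whole coset. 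You instead start from an arbitrary B\'ezout identity $sb+tc=d$ over $\ints$ and repair the signs by padding with the trivial relation $\tfrac{c}{d}b-\tfrac{b}{d}c=0$, obtaining each $kd$ directly in one stroke. The two arguments buy the same thing; yours is the more standard and arguably more transparent route (it makes explicit that one is just translating an integer solution within its solution lattice until it lands in the positive quadrant of signs), and it avoids proving the separate residue-cycling lemma, while the paper's version is self-contained modular arithmetic that doubles as the constructive recipe its algorithm uses. One cosmetic remark: you should say explicitly that, by the definition of \emph{opposing} and symmetry, you may take $b,c>0$ (so $\tfrac{b}{d},\tfrac{c}{d}$ are positive integers); you use this but state it only in passing.
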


Therefore, starting with $\set{\vecit{x}{0} + d\ints} \in I$ we can `saturate' the invariant under construction using the following lemma:
\begin{lemma}\label{lemma:zsaturation}
Let $h(x) = x + d$ be chosen as a reference counter amongst the counters.  If $\set{x + d\ints} \in I$,  then $\set{f(x) + d\ints}\in I$ for every function $f$.
\end{lemma}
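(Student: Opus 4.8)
The plan is to exploit the inductiveness of $I$ together with the coset-filling mechanism already established in \cref{lemma:twooppcounter}. First I would observe that since $\set{x+d\ints}\subseteq I$ and $I$ is an inductive invariant, applying $f$ gives $f(\set{x+d\ints})\subseteq I$; in particular the base point $f(x)$ lies in $I$. Writing $f(x)=ax+b'$ (with $a\neq 0$, as constant functions are redundant and have been removed), note that $f(\set{x+d\ints}) = \set{f(x) + ad\ints}$, which is already a possibly sparser sub-progression of the target residue class $\set{f(x)+d\ints}$. Hence the only remaining task is to thicken this up to the full coset.

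Next I would fill in the coset using the two opposing counters. By assumption the system contains counters with positive steps $b$ and $c$, where $d=\gcd(b,c)$, and the reference counter $h$ realises net motion by $\pm d$. Since $I$ is closed under both counters, starting from $f(x)\in I$ we may apply them in any order to conclude that $f(x)+(mb-nc)\in I$ for all $m,n\in\naturals$. The key number-theoretic fact — exactly the one underlying \cref{lemma:twooppcounter} — is that for positive $b,c$ the set $\set{mb-nc\mid m,n\in\naturals}$ coincides with $d\ints$: every such combination is a multiple of $d$, and conversely Bezout's identity yields a representation of $+d$, while adding copies of the relation $(c/d)b - (b/d)c = 0$ lets one force both coefficients to be non-negative; the symmetric argument gives $-d$, and closure under addition then delivers all of $d\ints$. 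Consequently $\set{f(x)+d\ints}\subseteq I$, as required.

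The hard part is conceptually light but requires care in the bookkeeping: one must ensure the coset is filled in \emph{both} directions, which is precisely why the presence of two \emph{opposing} counters is essential — a single counter would only let us move one way and could not realise $-d$ as a non-negative combination $mb-nc$. Since this coset-filling step is identical to the argument already carried out for reachable points in \cref{lemma:twooppcounter}, the only change being that the seed point $f(x)$ is merely known to lie in $I$ rather than to be reachable, I would simply invoke that reasoning, stressing that it uses only the inductiveness of $I$ and never the reachability of the seed point.
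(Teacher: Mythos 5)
There is a genuine gap: you have imported a hypothesis that the lemma does not have. Your coset-filling step assumes the system contains two \emph{opposing} counters with steps $b$ and $-c$ and $d=\gcd(b,c)$, but \cref{lemma:zsaturation} only assumes a single reference counter $h(x)=x+d$ chosen amongst the counters, and it is applied by the paper precisely in settings where no opposing pair exists: in the non-opposing-counters case, $\ints$-cosets $\set{x+d\ints}$ enter the invariant via inverters (\cref{lemma:inverterzs}) or via cycles against the counter direction, and \cref{lemma:zsaturation} is then invoked to saturate the residue classes even though all counters move the same way. In that setting the set $\set{mb-nc \mid m,n\in\naturals}$ you rely on is unavailable, so your proof does not establish the lemma in the generality in which it is used. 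Your closing claim that two opposing counters are ``essential'' because a single counter ``could not realise $-d$'' is exactly the point where the argument goes wrong.

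The paper's proof needs only the one-directional counter $h$, because the two-sidedness is already supplied by the hypothesis that the \emph{seed} is a full $\ints$-coset. Writing $f(x)=ax+b$ with $a\neq 0$ (constant functions being redundant and removed, as you correctly note), inductiveness gives $f(x)+adk\in I$ for \emph{every} $k\in\ints$. Now, to hit an arbitrary target $f(x)+dn$ with $n\in\ints$, one pre-compensates on the seed side: choose $k\in\ints$ with $ak\le n$ (possible for either sign of $a$, since $k$ ranges over all of $\ints$), land at $f(x)+adk\in I$, and then apply $h$ exactly $m=n-ak\in\naturals$ times to reach $f(x)+dn$. So the correct mechanism is ``overshoot downward through the scaled coset, then climb up with the single counter,'' not Bezout-style filling from the single point $f(x)$. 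Your first step (that $f(\set{x+d\ints})=\set{f(x)+ad\ints}\subseteq I$) is right and is the same as the paper's; the repair is to replace the opposing-counters argument with the choice of $k$ above.
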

\begin{proof}[Proof of \cref{lemma:zsaturation}]
  Consider the function $f(x) = ax + b$. If $x = y+dk \in I$, then $f(x)=ax+b
= ay + adk + b = f(y) + adk \in I$.

Now thanks to the presence of counter $h(x) = x+d$, by choosing the initial $k \in\ints$ appropriately and applying $h(x)$ sufficiently many times (say $m \in \naturals$ times), one can reach $f(x) + adk  + dm = f(x) + dn$ for any desired $n\in\ints$.
\qed \end{proof}

Without loss of generality if $\set{x +d\ints}$ is in the invariant, then $0\le x < d$. We then repeatedly use \cref{lemma:zsaturation} to find the required elements of the invariant. Since there are only finitely many residue classes (modulo $d$), every reachable residue class  $\set{c_1,\dots,c_n}$ can be found by saturation (in at most $d$ steps), yielding invariant $\set{c_1+d\ints}\cup \dots \cup \set{c_n+d\ints}$.

Thanks to \cref{wlog:twopureinvertertwooppcounters},  in all remaining cases there is without loss of generality at most one pure inverter.

\subsubsection{Only pure inverters.}
If there is exactly one pure inverter $f(x) = -x + b$ (and no other types of functions), then $f(\vecit{x}{0}) = -\vecit{x}{0} + b$ and $f(-\vecit{x}{0} + b)= \vecit{x}{0}-b + b = \vecit{x}{0}$, thus the reachability set is finite, with exact invariant $\set{\vecit{x}{0}, -\vecit{x}{0}+b }$.

\subsubsection{No Counters.}
If we are not in the preceding case and there are no counters, then there must be growing functions and by \cref{wlog:twopureinvertertwooppcounters}, without loss of generality at most one pure inverter. We show that all growing functions increase the modulus outside of some bounded region.

\begin{lemma}\label{lemma:growthbound}
For every $M \ge 0$ and every growing function $f(x) = ax + b$, $|a| \ge 2$, there exists $C^M_f \ge 0$ such that if $|x| \ge C^M_f$ then $|f(x)| \ge |x| + M$.
\end{lemma}
\begin{proof} By the triangle inequality we have: $|f(x)| = |ax+b| \ge |a||x| - |b|$. Thus $|x| \ge \frac{|b|+|M|}{|a| -1} \implies |a||x| - |b| \ge |x| + |M| \implies |f(x)|\ge |x| +M$.\qed
\end{proof}

This is the only situation in which the invariant is not exactly the reachability set, and requires us to take an overapproximation.

Let $C = \max\set{C^0_{f_1},\dots,C^0_{f_k}, |\target|+1}$, for $f_1,\dots,f_k$ growing functions. If there are no pure inverters then $\set{-C - \naturals} \cup \set{C + \naturals}$ is invariant (although may not yet contain the whole of $\orbit$). However, we can return the inductive invariant $\set{-C - \naturals} \cup \set{C + \naturals} \cup (\orbit{} \cap (-C, C))$. The set $\orbit{} \cap (-C, C)$ is finite and can elicited by exhaustive search, noting that once an element of the orbit reaches absolute value at least $C$, the remainder of the corresponding trajectory remains forever outside of $(-C,C)$.

If there is one pure inverter $g(x) = -x + d$ then observe that $-C$ is mapped to $C +d$ and $C+d$ is mapped to $-C$. Thus intuitively we want to use the interval $(-C, C+d)$. However two problems may occur: (a) since $d$ could be less than $0$ then $C+d$ may no longer be growing (under the application of the growing functions), and (b) an inverting growing function only ensures that $-C$ is mapped to a value greater than or equal to $C$, rather than $C+d$. Hence, we choose $C'$ to ensure that $C'\pm d$ is still growing by at least $|d|$ (under the application of our growing functions). Let $C' = \max\set{C^{|d|}_{f_1},\dots,C^{|d|}_{f_k}, |\target|+1}+|d|$. Then the invariant is $\set{-C' - \naturals} \cup \set{C'+d + \naturals} \cup (\orbit{} \cap (-C', C'+d))$.

\subsubsection{Non-opposing counters.}
The only remaining possibility (if there do not exist two opposing counters, and not all functions are growing or pure inverters), is that there are counter-like functions, but they are all counting in the same direction. There may also be a single pure inverter, and possibly some growing functions. 

Pick a counter $h(x) = x + d$ to be the reference counter; the choice is arbitrary, but it is convenient to pick a counter with minimal $|d|$. As a starting point, we have $\set{\vecit{x}{0} + d\naturals} \subseteq I$.

\begin{lemma}\label{lemma:inverterzs}
If there is an inverter $g(x) = -ax + b$, with $a> 0,b\in\ints$, and we have $\set{x + d\naturals} \subseteq I$ then $\set{g(x) + d\ints}\subseteq I$.
 \end{lemma}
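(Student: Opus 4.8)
The plan is to exploit the inductive-invariance of $I$ together with the two functions in play: the inverter $g$ and the reference counter $h(x) = x + d$. The hypothesis hands us the one-sided ray $\set{x + d\naturals} \subseteq I$, while the goal is the full two-sided $\ints$-linear set $\set{g(x) + d\ints}$; the whole argument hinges on how $g$ transforms a ray and how $h$ then refills the gaps.

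First I would apply $g$ to every point of the ray. Since $g(y) = -ay + b$, a direct computation gives $g(x + kd) = -a(x + kd) + b = g(x) - akd$ for each $k \in \naturals$. As $I$ is inductive, all these images lie in $I$, so $\set{g(x) - akd \mid k \in \naturals} = \set{g(x) + (-ad)\naturals} \subseteq I$. The essential thing to notice here is that, because $a > 0$, applying $g$ \emph{reverses} the orientation of the ray: the original ray advances by $+d$, whereas its image advances by $-ad$. We have therefore produced a ray running infinitely far in the direction opposite to that of the counter $h$.

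Next I would refill this image ray into a full lattice using the counter. From any element $g(x) - akd \in I$, repeatedly applying $h$ (legal since $I$ is inductive) yields $g(x) - akd + md \in I$ for every $m \in \naturals$. Ranging over all $k,m \in \naturals$, the coefficient of $d$ realised is $m - ak$, and $\set{m - ak \mid m,k \in \naturals}$ equals all of $\ints$: for $j \ge 0$ take $k = 0$, $m = j$; for $j < 0$ pick any $k$ with $ak \ge \abs{j}$ (possible since $a \ge 1$) and set $m = ak + j \ge 0$. Hence $g(x) + jd \in I$ for every $j \in \ints$, that is, $\set{g(x) + d\ints} \subseteq I$, as required.

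The argument is a short calculation, so I do not anticipate a serious obstacle; the one place demanding care is the sign bookkeeping. The reason the conclusion is a genuine two-sided $\ints$-set rather than another one-sided $\naturals$-ray is exactly the orientation flip caused by $a > 0$: the $g$-image ray extends unboundedly in the negative-$d$ direction, while the counter $h$ fills in unboundedly in the positive-$d$ direction, so together they cover the entire residue line $g(x) + d\ints$. Note finally that the reasoning is insensitive to the sign of $d$ itself, since only the \emph{relative} orientation of the two rays matters.
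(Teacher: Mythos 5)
Your proof is correct and takes essentially the same route as the paper's: both push the ray $\set{x + d\naturals}$ through $g$ to get the reversed ray $\set{g(x) - ad\,\naturals} \subseteq I$, then saturate with the reference counter $h(x) = x + d$ and observe that the realised coefficients $\set{m - ak \mid m,k \in \naturals}$ exhaust $\ints$. Your explicit case split showing every $j \in \ints$ is attainable merely spells out a detail the paper leaves implicit.
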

The crucial difference with \cref{lemma:zsaturation} is the observation that now an $\naturals$-linear set has induced a $\ints$-linear set.
\begin{proof}
Let $r =g(x) + dm$ for $m \in \ints$. We show $r \in I$. Consider $x + dn$ for $n\in\naturals$, then $g(x+dn) = -a(x+dn) + b = -ax+b-adn = g(x) -adn$. Hence $g(x) -adn + dk$, $n,k\in\naturals$, is reachable by applying $k$ times the function $h(x)$. Hence for any $m\in\ints$ there exists $k,n\in\naturals$ such that $k - na = m$, so that $r$ is indeed reachable.
\qed \end{proof}

Similarly to the situation with two opposing counters, whenever the invariant contains some $\ints$-linear set, \cref{lemma:zsaturation} allows us to saturate amongst the finitely many reachable residue classes.

However, the invariant may contain subsets that are not $\ints$-linear.
Consider $\set{x + d\naturals}\subseteq I$, which is not yet invariant. We repeatedly apply non-inverting functions to $\set{x + d\naturals}$ to obtain new $\naturals$-linear sets (not $\ints$-linear sets).  When the function applied `moves' in the direction of the counters this will ultimately saturate (in particular when applying other counter functions). However, in the opposite direction, we may generate infinitely many such classes.

\begin{example}
Consider the reference counter $h(x) = x +4$, with initial point $5$. This yields an initial set $\set{5 + 4 \naturals}\subseteq \orbit$, where $5$ is the initial point and $4\naturals{}$ is derived from the counter increment.  Now when applying $x \mapsto 2x +6$ to $\set{5 + 4 \naturals}$ we obtain $\set{10+ 6 + 8 \naturals  + 4\naturals}=\set{16+4\naturals}$, then $\set{38+4\naturals}$, and then $\set{82 + 4\naturals}$. However $\set{82+4\naturals} \subseteq\set{38+4\naturals}$ and we can therefore stop with the invariant $\set{5+4\naturals} \cup \set{16+4\naturals} \cup \set{38+4\naturals}$.

However, if the initial sequence is not moving in the direction of the reference counter, this saturation does not occur. Consider $\set{5 + 4 \naturals}$ with the function $x\mapsto 2x - 6$. Then $\set{5 + 4 \naturals}$ maps to $\set{10-6  + 8\naturals+ 4 \naturals}= \set{4+4\naturals}$, which maps to $\set{2 + 4 \naturals}$, $\set{-2 + 4 \naturals}$, $\set{-10 + 4 \naturals}$, $\set{-26 + 4 \naturals}$, and so on. However $-2$ and $-10$ are both $2$ modulo $4$ (and so is $-26$ as well). This means in the negative direction we can obtain arbitrarily  large negative values  congruent to $2$ modulo $4$ and then use the reference counter $h(x) = x + 4$ to obtain any value of $\set{2 + 4\ints}$.\qed
\end{example}

Clearly we can examine all reachable residue classes defined by our reference counter. Any residue class reachable after an inverting function induces a $\ints$-linear set. So it remains to consider those $\naturals{}$-linear sets reachable without inverting functions. The remaining case to handle occurs when we repeatedly induce $\naturals{}$-linear sets until they repeat a residue class in the direction opposite to that of the reference counter. 

We consider the case for $h(x) = x+ d$ with $d\ge 0$. The case with $h(x) =x-d$ is symmetric. It remains to detect when a set  $\set{x+d\naturals}$ leads to $\set{y+d\naturals}$ by a sequence of non-inverting functions with $x\equiv y\mod d$. Then by repeated application of these functions one can reach sets $\set{z+d\naturals}$ with $z$ arbitrarily small, hence we can replace  $\set{x+d\naturals}$ by $\set{x+d\ints}$. We give further details in the appendix.

\subsubsection{Reachability.}

The above procedure is sufficient to decide reachability. In all cases apart from that in which there are no counters, the invariants produced coincide precisely with the reachability sets. A reachability query therefore reduces to asking whether the target belongs to the invariant.

In the remaining case, the invariant obtained is parametrised by the target via the bound $C'$. The target lies within the region $(-C',C' + d)$, within which we can compute all reachable points. Thus once again, the target is reachable precisely if it belongs to the invariant. However, for a new target of larger modulus, a different invariant would need to be built.

\subsubsection{Complexity.}
\begin{restatable}{lemma}{complexityunarypoly}\label{lemma:complexityunarypoly}
Assume that all functions, starting point, and target point are given in unary. Then the invariant can be computed in polynomial time.
\end{restatable}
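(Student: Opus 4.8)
The plan is to analyze each case of the construction separately and argue that, when all numeric data are presented in unary, the sizes of the intermediate objects and the number of iterations stay polynomially bounded. Since the invariant is built by a finite case distinction (two opposing counters, only pure inverters, no counters, non-opposing counters), it suffices to bound the work in each branch by a polynomial in the unary input size $N$, which dominates each coefficient $|a_i|,|b_i|$, the starting point $|\vecit{x}{0}|$, and the target $|\target|$.

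First I would dispatch the cheap branches. For only pure inverters the reachability set has at most two points and is written down immediately. For the opposing-counter case, the reference modulus $d=\gcd(b,c)$ is computable in polynomial time, and the saturation of \cref{lemma:zsaturation} ranges over at most $d\le N$ residue classes, each requiring one application of each function; since there are at most $N$ functions and all arithmetic is on numbers bounded by a fixed polynomial in $N$, the total cost is polynomial. The no-counters branch hinges on the bound $C$ (or $C'$): by \cref{lemma:growthbound}, $C^M_f=\lceil(|b|+|M|)/(|a|-1)\rceil$ is at most linear in the unary-given $|b|,|M|$, so $C$ and $C'$ are $O(N)$; the finite set $\orbit\cap(-C,C)$ is collected by exhaustive forward search, and I would argue that each trajectory escapes $(-C,C)$ after only polynomially many steps (a growing function increases the modulus by at least $1$ outside $C^0_f$, so escape happens within $O(C)=O(N)$ applications), with at most polynomially many reachable points inside the window.

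The main obstacle will be the non-opposing-counters branch, because there the construction may \emph{a priori} generate infinitely many $\naturals$-linear classes before detecting a repeated residue in the direction opposite to the reference counter. The key is to bound how long this detection takes: there are only $d\le N$ residue classes modulo $d$, so any sequence of non-inverting applications that keeps producing fresh $\naturals$-linear sets must repeat a residue class within $d$ steps along any single function-application path; once a residue repeats in the anti-counter direction we collapse $\set{x+d\naturals}$ to $\set{x+d\ints}$ and resume $\ints$-saturation. I would formalize this by viewing the reachable residue classes (with a bit flagging whether the class has been `invertingly' produced) as states of a finite automaton of size $O(d)=O(N)$, so the whole fixpoint saturates in a number of steps polynomial in $N$ and $k$, with each step manipulating numbers of size polynomial in $N$ (base vectors reduced modulo $d$, periods equal to $d$).

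Finally I would assemble the pieces: in every branch the number of iterations, the number of linear components produced, and the bit-size of every base and period vector are all bounded by a fixed polynomial in the unary input length $N$, and each individual operation (gcd, modular reduction, one affine application, a containment test $\set{y+d\naturals}\subseteq\set{x+d\naturals}$) is polynomial-time. Hence the overall algorithm runs in polynomial time, establishing \cref{lemma:complexityunarypoly}. The one place demanding care, and where I would spend the most argument, is verifying the polynomial termination bound in the non-opposing-counters case; the other branches are routine once the linear bound on $C,C'$ and on the escape time from the bounded window is in hand.
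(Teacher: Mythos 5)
Your proposal is correct and takes essentially the same route as the paper's proof: the same case split, the unary bounds $C,C'=O(N)$ via \cref{lemma:growthbound}, exhaustive search over the $O(C)$ integer points of the bounded window, and bounding the non-opposing-counters saturation by the $d\le N$ residue classes with anti-counter cycles detected within at most $d$ steps (your automaton framing is just a repackaging of the paper's $O(2C+3d)$ count of invariant-changing steps and its two-phase search). One cosmetic caveat: individual trajectories need not escape $(-C,C)$ (a pure inverter can cycle inside the window), but this does not affect correctness since, as you also note, the search visits each of the $O(C)$ window points at most once.
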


Without the unary assumption, the invariant could have exponential size, and hence require at least exponential time to compute. That is because the invariant we construct could include every value in an interval, for example, $(-C,C)$, where $C$ is of size polynomial in the largest value.

As shown in~\cite{Fremont2013}, the reachability problem is at least $\NP$-hard in binary, because one can encode the integer Knapsack problem (which allows an object to be picked multiple times rather at most once). Moreover the Knapsack problem is efficiently solvable in pseudo-polynomial time via dynamic programming; that is, polynomial time assuming the input is in unary, matching the complexity of our procedure.

\section{The POROUS Tool}\label{sec:tool}
Our invariant-synthesis tool \porous\footnote{Tool: \href{http://invariants.davidpurser.net}{\color{blue}invariants.davidpurser.net} Code: \href{https://github.com/davidjpurser/porous-tool}{\color{blue}github.com/davidjpurser/porous-tool}} computes $\naturals{}$-semi-linear invariants for  point and $\mathbb{Z}$-linear targets on systems defined by  one-dimensional affine functions. \porous\ includes implementations of the
  procedures of \cref{thm:ztargets} (restricted to one-dimensional affine systems) and \cref{thm:1daffine}. \porous\ is built in Python and can be used by command-line file input, a web interface, or by directly invoking the Python packages.

  \porous\ takes as input an instance (a start point, a target, and a collection of functions) and returns the generated invariant. Additionally it provides a proof that this set is indeed an inductive invariant: the invariant is a union of $\naturals{}$-linear sets, so for each linear set and each function, \porous\ illustrates the application of that function to the linear set and shows for which other linear set in the invariant this is a subset. Using this invariant, \porous\ can decide reachability; if the specific target is reachable the invariant is not in itself a proof of reachability (since the invariant will often be an overapproximation of the global reachability set).
Rather, equipped with the guarantee of reachability, \porous\ searches for a direct proof of reachability: a sequence of functions from start to target (a process which would not otherwise be guaranteed to terminate).

\subsubsection{Experimentation.}
\porous\ was tested on all $2^7-1$ possible combinations of the following function types, with $a\ge2, b\ge1$: positive counters ($x\mapsto x+b$), negative counters ($x\mapsto x-b$), growing ($x\mapsto ax\pm b$), inverting and growing ($x\mapsto -ax\pm b$), inverters with positive counters ($x\mapsto -x+b$), inverters with negative counters ($x\mapsto -x-b$) and the pure inverter ($x\mapsto -x$). For each such combination a random instance was generated, with a size parameter to control the  maximum modulus of  $a$ and $b$, ranging between 8 and 1024. The starting point was between 1 and the size parameter and the target was between 1 and 4 times the size parameter. Ten instances were tested for each size parameter  and each of the $2^7-1$ combinations, with between 1 and 9 functions of each type (with a bias for one of each function type).

\begin{table}[t]
\begin{adjustbox}{width=\textwidth,center}
\noindent\begin{tabular}{|p{0.6cm}|p{0.8cm}|p{0.9cm}||p{1.9cm}|p{0.8cm}|p{0.9cm}||p{1.85cm}|p{1.85cm}|p{1.7cm}|}
\hline
Size & \multicolumn{2}{p{1.6cm}||}{ Invariant Build Time} & Unreachable Instances & \multicolumn{2}{p{1.7cm}||}{Invariant Proof Time} & Reachable Instances & Reachable with proofs & Reachability Proof time \\
 & avg & max & & avg & max &  & within ${\approx}30$s & avg  \\\hline
8 & 0.001 & 0.009 &  100 (9.84\%) & 0.005 & 0.261 &  916 (90.2\%) &  911 (99.5\%) & 0.033 \\ 
16 & 0.001 & 0.020 &  122 (12.0\%) & 0.010 & 0.788 &  894 (88.0\%) &  885 (99.0\%) & 0.053 \\ 
32 & 0.003 & 0.068 &  134 (13.2\%) & 0.020 & 0.911 &  882 (86.8\%) &  843 (95.6\%) & 0.203 \\ 
64 & 0.008 & 0.261 &  150 (14.8\%) & 0.052 & 2.969 &  866 (85.2\%) &  766 (88.5\%) & 0.294 \\ 
128 & 0.021 & 0.557 &  153 (15.1\%) & 0.096 & 2.426 &  863 (84.9\%) &  719 (83.3\%) & 0.464 \\ 
256 & 0.088 & 2.838 &  166 (16.3\%) & 0.316 & 43.587 &  850 (83.7\%) &  620 (72.9\%) & 0.998 \\ 
512 & 0.428 & 9.312 &  162 (15.9\%) & 0.899 & 21.127 &  854 (84.1\%) &  570 (66.7\%) & 1.120 \\ 
1024 & 1.121 & 20.252 &  173 (17.0\%) & 3.275 & 65.397 &  843 (83.0\%) &  514 (61.0\%) & 1.646 \\ \hline
all & 0.209 & 20.252 &  1160 (14.3\%) & 0.584 & 65.397 &  6968 (85.7\%) &  5828 (83.6\%) & 0.499 \\
\hline
\end{tabular}
\end{adjustbox}
\caption{Results varying by size parameter (last row includes all instances tested). Times  are given in seconds, with the average and maximum shown (except reachability proof time, which are all approximately 30s due to instances that terminate just before the timeout).}
\label{fig:results}
\end{table}

Our analysis, summarised in \cref{fig:results}, illustrates the effect of the size parameter. The time to produce the proof of invariant is separated from the process of building the invariant, since producing the proof of invariant can become slower as $|I|$ becomes larger; it requires finding $L_k\in I$  such that $f_i(L_j) \subseteq L_k$ for every linear set $L_j\in I$ and every affine function $f_i$. In every case \porous\ successfully built the invariant, and hence decided reachability very quickly (on average well below 1 second) and also produced the proof of invariance in around half a second on average. To demonstrate correctness in instances for which the target is reachable \porous\ also attempts to produce a proof of reachability (a sequence of functions from start to target). Since our paper is focused on invariants as certificates of non-reachability, our proof-of-reachability procedure was implemented crudely as a simple breadth-first search without any heuristics, and hence a timeout of 30 seconds was used for this part of the experiment only.

Our experimental methodology was partially limited due to the high prevalence of reachable instances. A random instance will likely exhibit a large (often universal) reachability set. 
When two random counters are included, the chance that $\gcd(b_1,b_2) = 1$ (whence the whole space is covered) is around $60.8\%$ and higher if more counters are chosen.

Overall around 86\% of instances were reachable (of which 84\% produced a proof within 30 seconds).   Of the 14\% of unreachable instances, all produced a proof, with the invariant taking around 0.2 seconds to build and 0.6 seconds to produce the proof. The  30-second timeout when demonstrating reachability directly is several orders of magnitudes longer than answering the reachability query via our invariant-building method.

A typical academic/consumer laptop was used to conduct the timing and analysis  (a four-year-old, four-core MacBook Pro).

\section{Conclusions and Open Directions}
We introduced the notion of porous invariants, which are not necessarily convex and can in fact exhibit infinitely many `holes', and studied these in the context of multipath (or branching/nondeterministic) affine loops over the integers, or equivalently nondeterministic integer linear dynamical systems. We have in particular focused on reachability questions. Clearly, the potential applicability of porous invariants to larger classes of systems (such as programs involving nested loops) or more complex specifications remains largely unexplored.

Our focus is on the boundary between decidability and undecidability, leaving precise complexity questions open. Indeed, the complexity of synthesising invariants could conceivably be quite high, except where we have highlighted polynomial-time results. On the other hand, the invariants produced should be easy to understand and manipulate, from both a human and machine perspective.

On a more technical level, in our setting the most general class of invariants that we consider are $\naturals{}$-semi-linear. There remains at present a large gap between decidability for one-dimensional affine functions, and undecidability for linear updates in dimension 91 and above. It would be interesting to investigate whether decidability can be extended further, for example to dimensions 2 and 3.

\subsubsection*{Acknowledgements}
\begin{sloppypar}
This work was funded by DFG grant 389792660 as part of TRR~248 (see \href{https://perspicuous-computing.science}{\color{blue}perspicuous-computing.science}). Jo\"el Ouaknine was 
supported by ERC grant AVS-ISS (648701), and is also
affiliated with Keble College, Oxford as \href{http://emmy.network/}{\color{blue}emmy.network} Fellow.
James Worrell was supported by EPSRC Fellowship EP/N008197/1.
\end{sloppypar}

\bibliographystyle{splncs04}

\newpage
\appendix

\section{Proof of \cref{thm:tzeng}}
\label{appen:thm:tzeng}
\thmtzeng*
\begin{proof}
The result of \cite[Theorem 3.1]{Tzeng92} proceeds by finding new points in the reachability set and adding them to a set of points if the new point is linearly independent of the current version this set. Whilst the result of  \cite{Tzeng92} refers to linear independence, this can be converted to affine independence by increasing the dimension by one.

The procedure works via a pruned version breadth-first search, with nodes only expanded if its children are linearly independent of the current set. Hence, the first point found in the tree is the initial point $\vecit{x}{0}$, and therefore this point is included. The maximum depth of the tree that needs to be explored is $d$, and so every point included is reached with at most $d$ applications of matrices to $\vecit{x}{0}$. Hence, if the largest absolute value of a point or matrix entry is $\mu$, after $d$ iterations, the largest absolute value is $d^{d-1}\mu^d$. This is by induction on the largest possible value $\mu$ for every entry:
\[\text{Base case: }
\begin{pmatrix}
\mu & \dots & \mu\\ 
& \ddots\\
\mu & \dots & \mu\\ 
\end{pmatrix}
\begin{pmatrix}
d^0\mu \\ 
\vdots\\
d^0\mu \\ 
\end{pmatrix}
=
\begin{pmatrix}
d\mu^2 \\ 
\vdots\\
d\mu^2 \\ 
\end{pmatrix}
\]
\[\text{Inductive  case: }
\begin{pmatrix}
\mu & \dots & \mu\\ 
& \ddots\\
\mu & \dots & \mu\\ 
\end{pmatrix}
\begin{pmatrix}
d^{k-1}\mu^k \\ 
\vdots\\
d^{k-1}\mu^k \\ 
\end{pmatrix}
=
\begin{pmatrix}
d\mu(d^{k-1}\mu^k) \\ 
\vdots\\
d\mu(d^{k-1}\mu^k) \\ 
\end{pmatrix}
=
\begin{pmatrix}
d^k\mu^{(k+1)}\\ 
\vdots\\
d^k\mu^{(k+1)}\\ 
\end{pmatrix}
\]

The result of \cite{Tzeng92} is in polynomial time in the number of arithmetic operations, we observe that this is also polynomial time in the bit-size. The independence checking in the algorithm involves checking linear independence  of at most $d$ vectors all having bit size at most $log((d\mu)^{d}) = d\log(d) + d\log(\mu)$, which can be done in polynomial time in the bit-size (for example by Bareiss algorithm for calculating the determinant).\qed
\end{proof}

\section{Proof of \cref{lem:RtoN}}
\rton*
\begin{proof}
Let $S= \set{x + \sum_{i} p_i\mathbb{R}_+}$ be a $\mathbb{R}_+$-linear set
where the vectors $p_i$ have rational coefficients and $x$ is an integer vector.
Let $k\in\N$ so that the vectors $kp_i$ have integer coefficients. We denote by 
$v_i$ the integer vectors obtained as a convex combination of the vectors $kp_i$.
Then the set $T = \set{x + \sum_{i} v_i\N}$ contains exactly the integer vectors contained
in $S$.

Indeed, first $T$ only contains integer points as both $x$ and the vectors $v_i$ are integer 
vectors. Secondly, all the vectors in $T$ are included in $S$ as the period vectors of $T$ lie
in the convex hull of the vectors of $S$. 
Finally, given an integer vector $y$ in $S$,
$y$ can be rewritten as $y = x + v + \sum_{i} m_ikp_i$ where for all $i, m_i\in \N$ and $v$ is an integer vector lying in 
the convex hull of the vectors $kp_i$. Therefore there exists $j$ such that $v_j=v$ and as
for all $i$, $kp_i$ is a period vector of $T$, $y\in T$.\qed
\end{proof}

\section{Proof of \cref{thm:undec}}

\thmundec*

\begin{proof}
We will prove the result for $m+5$ matrices of size $7$. This can then be transformed in a usual way to two matrices of size $7m+35$ (See Theorem~9 of~\cite{fijalkow2019monniaux} for instance).

In order to simplify the main part of the proof, let us first show that one can enforce an order
between the matrices using affine transformations on one dimension. Let us denote $p$ this 
dimension, it is initially equal to 1 and its target value is 0. 
Consider the three following affine transformation: $f_1(p)= 2p -1$, $f_2(p) = 2p -2$ and 
$f_3(p) = 2p$, then the only sequences of transformation allowing to reach the target are of 
the form $f_3^*f_2f_1^*$. Indeed, let $\I = \{p\mid p\geq 2 \vee p\leq -1\}$, 
we have (1) if $p\in \I$, then for all $i\in \{1,2,3\}$, 
$f_i(p)\in \I$,
(2) $f_1(1) = 1$ and $f_1(0)\in \I$,
(3) $f_2(1) = 0 $ and $f_2(0)\in \I$ and
(4) $f_3(1)\in \I$ and $f_3(0)=0$.
As a consequence, the inductive invariant $\I$ ensure that any sequence of transformation
that do not have the desired order cannot reach the target. In the following, we will call type 
1, 2 or 3 the transformations we define, depending on whether they implictly contain the function
$f_1$, $f_2$ or $f_3$.

We reduce an instance $\{(u^{1},v^{1}),\dots, (u^{m},v^{m})\}$ of the $\omega$-PCP problem 
to the invariant synthesis problem.
In order to simplify future notations, given a finite or infinite word $w$,
we denote by $|w|$ the length of the word $w$ and given an integer $i\leq |w|$,
we write $w_i$ for the $i$'th letter of $w$.
Given a finite or infinite word $w$ on alphabet $\{1,\dots,m\}$ we denote by $u^{w}$ and 
$v^{w}$ the words on the alphabet $\{0,2\}$ such that $u^{w}= u^{w_1}u^{w_2}\dots$ and 
$v^{w}= v^{w_1}v^{w_2}\dots$. 
Given a (finite or infinite) word $w$ on 
the alphabet $\{0,2\}^*$, denote by
$\enc{w}=\sum_{i=1}^{|w|}w_i4^{|w|-i}$ the quaternary encoding of $w$. 
It is
clear that it satisfies $\enc{ww'}=4^{|w'|}\enc{w}+\enc{w'}$.
For all $i\leq m$, we denote by ${n_i}=4^{|u^{i}|}$, $m_i=4^{|v^{i}|}$ 
and $\max_i = \max (n_i,m_i)$.

We work with 5 dimension, $(s,c,d,n,k)$, and define the following transformations:
\begin{itemize}
\item For $i \leq m$, the type 1 transformation $\mathsf{Simulate_i}$ on $(s,c,d,n,k)$ encode 
the action 
of reading the pair $(u^i,v^i)$ and increases the counters $n$ and $k$: it simultaneously applies
$s \leftarrow \max_i s + [u^i]\frac{\max_i}{n_i}-[v^i]\frac{\max_i}{n_i}$,
$c \leftarrow \frac{\max_i}{n_i}c$,
$d \leftarrow \frac{\max_i}{n_i}d$, $n\leftarrow n+k$ $k\leftarrow k+1$.
\item The type 2 transformation $\mathsf{Transfer}$ on $(s,c,d,n,k)$ gather some of the values 
in order to compare them:
$s\leftarrow s  - c - d$, $c \leftarrow -s -c-d$.
\item The type 3 transformation $\mathsf{Inc_s}$ increments $s$: $s\leftarrow s+1$.
\item The type 3 transformation $\mathsf{Inc_c}$ increments $u$: $c\leftarrow c+1$.
\item The type 3 transformation $\mathsf{Dec}$ decreases $k$ and $n$: $n\leftarrow n-k$, 
$k\leftarrow k-1$.
\item The type 3 transformation $\mathsf{Dec_k}$ decrements $k$: $k\leftarrow k-1$.
\end{itemize}
These $m+5$ transformations need 7 dimensions in total: the five above, $(s,c,d,n,k)$, the one used for ordering the transformations,$p$, and one dimension constantly equal to 1, 
required to use affine transformations.

We now show that there is a solution to the given instance of the $\omega$-PCP problem iff
there does not exist a $\mathbb{N}$-semi-linear invariant 
for the system with initial point $x = (0,1,1,0,0,1,1)$, target $y=(0,0,0,1,0,0,1)$
and using the matrices inducing the transformations defined above.

Assume first that there is a solution $w$ to the $\omega$-PCP instance.
Consider the sequence of points $(x_n)$ obtained as follows:
for all $j\in \mathbb{N}$, denoting $w_{\leq j}$ the prefix of $w$ of length, 
$x_j = (s_j, c_j,0, n_j,k_j,0,1) = \mathsf{Transfer}\ \mathsf{Simulate_{w_{\leq j}}} x$ 
where $\mathsf{Simulate_{w_{\leq j}}}$ represents the transformation 
$\mathsf{Simulate_{w_{j}}}\dots\mathsf{Simulate_{w_2}}\mathsf{Simulate_{w_1}}$.
We have that $s_j$ and $c_j$ are negative. Indeed, let $(s,c,d)$ be the three first components of 
$\mathsf{Simulate_{w_{\leq j}}} x$, we have that 
$s=c\enc{u^{wi}}-d\enc{v^{wi}}$. 
As $w_{\leq j}$ is a prefix of a solution to the 
$\omega$-PCP instance, assuming $|u^wi|\leq |v^wi|$ this implies that 
\begin{align*}
|s| & = |c\enc{u^{wi}}-d\enc{v^{wi}}|\\
& = \sum_{j=1}^{|v^wi|}|u^{wi}_j-v^{wi}_j|c4^{|u^wi|-j} \\
& = \sum_{j=|u^wi|+1}^{|v^wi|}v^{wi}_jc4^{|u^wi|-j} \\
& \leqslant \frac{2c}{3}
\end{align*}
Thus $|s|-c-d$ is negative, thus $s_j = s-c-d$ and $c_j = -s-c-d$ are negative.

Due to the above, by applying to the points $x_j$ a number of time the transformations
$\mathsf{Inc_s}$ and $\mathsf{Inc_c}$, we obtain the sequence of points $(y_j)$ where 
$y_j = (0,0,0,n_j,k_j,0,1)$.
We claim that any semi-linear invariant containing all the points $y_j$ also contains a point
of the shape $(0,0,0,0,n_j+d,k_j,0,1)$ where $d$ is a positive integer. 
This will imply the result as from such a point, one can 
reach the target by $d-1$ applications of $\mathsf{Dec_k}$ and $k_j$ applications of 
$\mathsf{Dec}$ and thus there is no semi-linear invariant 
of the system that does not intersect the target.

Let us now prove the above claim.
Let $\I$ be a semi-linear set containing every point $(y_j)$ (which we will see as two-dimensional objects by projecting on the 4th and 5th dimension). Then there exists a linear
set $\I'\subseteq \I$ that contains infinitely many vectors of $(y_j)$.
This set $\I'$ is defined by an initial vector, and a set of period vectors. 
As $\I'$ contains
infinitely many vectors of $(y_j)$ where the ratios between the first and second component 
is increasing, one of the period vectors is of the form
$(d,0)$ where $d$ is a strictly positive integer. 
Let $j$ be such that $y_j\in\I'$, then $(n_j+d, k_j)\in \I'$ which implies the claim. 

As a consequence, every $\mathbb{N}$-semi-linear set over-approximating the system intersects 
with the target.

Conversely, assume that there is no solution to the $\omega$-PCP instance.
There exists $n_0\in \N$ such that for every infinite word $w$ on alphabet
$\{0,\dots, m\}$ there exists $n\leq n_0$ such that $u^{w}_n \neq v^{w}_n$.
Indeed, consider the tree which root is labelled by $(\eps,\eps)$ and, given a node $(u,v)$ of the tree,
if for all $n\leq \min(|u|,|v|)$ we have $u_n=v_n$, then this node has $m$ children:
the nodes $(u u^{i},v v^{i})$ for $i=1\dots m$. This tree is finitely branching and
does not contain any infinite path (which would induce a solution to the $\omega$-PCP instance).
Thus, according to K\"onig's lemma, it is finite. We can therefore choose the height of this tree as our $n_0$.

We define the invariant $\I=\I_1\cup \I_2\cup \I_3$ where
\begin{align*}
\I_1=&\big\{\mathsf{Simulate_w} (x) \mid w\in \{1,\dots,m\}^* \wedge |w|\leq n_0+1\big\},\\
\end{align*}
\begin{align*}
\I_2=& \big\{z=(s,c,0,n,k,0,1) \mid  z= (\mathsf{Inc_s})^*(\mathsf{Inc_c})^*
(\mathsf{Dec})^*(\mathsf{Dec_k})^* \mathsf{Transfer}\  \mathsf{Simulate_w} (x) 
\\ &\wedge w\in \{1,\dots,m\}^* \wedge |w|\leq n_0+1 \wedge s,t,n,k\in \N \big\}
\end{align*} 
and 
\begin{align*}
\I_3= &\big\{(s,c,d,n,k,p,1)\mid (|s|-c-d\geq 1\wedge c\geq 0 \wedge d \geq 0 \wedge p=1)\\ 
&\vee ((s\geq 1\vee c\geq 1\vee n\leq -1 \vee k\leq -1) \wedge p=0) \vee p\leq -1 \vee p \geq 2\}.
\big\}
\end{align*}

By definition, $\I$ is an $\mathbb{N}$-semi-linear set, contains $x$ and does not contain $y$.
The difficulty is to show stability under the transformations.

\noindent $\bullet$ Let $z=\mathsf{Simulate_w}(x)\in \I_1$, for some $w\in \{1,\dots,m\}^*$ with
$|w| \leq n_0 +1$. By ordering if we apply 
a transformation outside $\mathsf{Transfer}$ or a $\mathsf{Simulate_i}$ for some $i$, we reach
$\I_3$.
\begin{itemize}
\item For $i\leq m$, if $|w| \leq n_0$, then $\mathsf{Simulate_i}z\in \I_1$. Else,
$\mathsf{Simulate_{wi}}x=(s,c,d,n,k,p,1)$ with $|w|=n_0+1$.
But then, there exists $n_1\leqslant n_0$ such that $u^{wi}_{n_1}\neq v^{wi}_{n_1}$. Let $n_2$ be
the smallest such number, then assume without loss of generality that $c\geq d$, we have
    \begin{align*}
        s
            &= c\enc{u^{wi}}-d\enc{v^{wi}}\\
            &= (u^{wi}_{n_2}-v^{wi}_{n_2})c4^{|u^wi|-n_2}+
            \sum_{j=n_2+1}^{\max(|u^wi|,|v^wi|)}(u^{wi}_j-v^{wi}_j)c4^{|u^wi|-j}    
    \end{align*}
\text{since }$u^{wi}_j=v^{wi}_j$\text{ for }$j<n_2$. 
Thus,
\begin{align*}
        |s| &\geqslant 2c4^{|u^wi|-n_2}-\tfrac{2}{3}c4^{|u^wi|-n_2} &&\text{since }|u^{wi}_{n_2}-u^{wi}_{n_2}|=2\text{ and for } n\geq n_2, |u^{wi}_{n}-u^{wi}_{n}|\leq 2\\
            &\geqslant c4^{|u^{wi}|-n_2} \tfrac{2}{3}\\
            &\geqslant 2c + 1 &&\text{since }n_2\leqslant n_0\text{ and }|u^{wi}|\geqslant n_0+2.
    \end{align*}
    As $c\geq d$, this shows that $\mathsf{Simulate_{wi}} z\in \I_3$.  
\item $\mathsf{Transfer}z\in \I_2$. 
\end{itemize}

\noindent$\bullet$
Let $z\in \I_2$ and $f$ be one of the transformations, then $f(z) \in \I_2$ if $f$ increased
(resp. decreased) a negative (resp. positive) component. Otherwise $f(z) \in \I_3$.

\noindent $\bullet$
Let $z=(s,c,d,n,k,p,1)\in \I_3$, $f$ be one of the transformations and 
$f(z) = (s',c',d',n',k',p',1)$. 
\begin{itemize}
\item 
if $p=0$, then either $p'\leq -1$ and $f(z)\in \I_3$ or $z$ satisfies 
$(s\geq 1\vee c\geq 1\vee n\leq -1 \vee k\leq -1)$ and then 
$f(z)$ satisfies $(s'\geq 1\vee c'\geq 1\vee n'\leq -1 \vee k'\leq -1)$, thus 
$f(z)\in\I_3$.
\item if $p =1$, then $|s| -c-d \geq 1, c\geq 0 $ and $d\geq 0$. 
There is three possibilities (1) $p'=2$ and thus $f(z) \in \I_3$, (2) $f=\mathsf{Transfer}$
then $p'=0$ and either $s' \geq 1$ or $c' \geq 1$ 
and thus $f(z)\in \I_3$ or (3) $f=\mathsf{Simulate_{i}}$ for $i\leq m$. In the latter case
without loss of generality, assume that $d'\geqslant c'$ (this is completely symmetric in $c'$ and $d'$). We have that
\begin{align*}
        |s'|
            &=|\max_i(s)+c'\enc{u^{i}}-d'\enc{v^{i}}|&&\text{by applying }\mathsf{Simulate_{i}}\\
            &\geqslant \max_i|s|- d'\max(\enc{u^{i}},\enc{v^{i}})\\
            &\geqslant \max_i(c+d+1)-d'\max(\enc{u^{i}},\enc{v^{i}})&&\text{by assumption on }|s|\\
            &\geqslant \max_i(c+d+1)-\tfrac{2}{3}d\max_i
            &&\text{since }\enc{u_i}\in[0,\tfrac{2n_i}{3}]\\
            &= \max_i(c+ d/3) + \max_i  \\
            &\geqslant c'+d' + 1&&
\end{align*}
since $\max_i c\geq c'$, $\max_i d/3 \geq d'$ (as $m_i\geq 4$) and $\max_i\geqslant 4$.
This shows that $f(z)\in \I_3$.
\end{itemize}
Therefore $\I$ is inductive and thus a $\mathbb{N}$-semi-linear invariant of the system.
This concludes the reduction.\qed
\end{proof}

\section{Additional proofs for \cref{thm:1daffine}}
\subsection{Proof of \cref{lemma:twooppcounter}}

\begin{restatable}{lemma}{coprimestuff}\label{lemma:coprimestuff}
For $\ell,k$ coprime, the sequence $a_n = (n\ell \mod k)$ for $n\in\naturals$ cycles through every modulo class $\set{0,\dots,k-1}$.
\end{restatable}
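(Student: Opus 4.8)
The plan is to exploit the coprimality of $\ell$ and $k$ to show that the map $n \mapsto (n\ell \mod k)$ is a bijection on the residues $\set{0,\dots,k-1}$, from which surjectivity---i.e.\ cycling through every class---follows at once. Equivalently, one is asserting that $\ell$ is invertible modulo $k$, so that multiplication by $\ell$ permutes $\ints/k\ints$.

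First I would note that the sequence is periodic with period dividing $k$, since $(n+k)\ell = n\ell + k\ell \equiv n\ell \pmod{k}$. Hence it suffices to examine the $k$ values $a_0, a_1, \dots, a_{k-1}$ and to show that they are pairwise distinct. The key step is this injectivity claim on $\set{0,\dots,k-1}$: suppose $n_1\ell \equiv n_2\ell \pmod{k}$ with $0 \le n_1, n_2 < k$. Then $k$ divides $\ell(n_1 - n_2)$; since $\gcd(\ell,k)=1$, Euclid's lemma forces $k \mid (n_1 - n_2)$, and because $\abs{n_1 - n_2} < k$ this yields $n_1 = n_2$. Thus $a_0,\dots,a_{k-1}$ are $k$ distinct elements of the $k$-element set $\set{0,\dots,k-1}$, so by a pigeonhole (counting) argument every residue class in $\set{0,\dots,k-1}$ is attained.

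I expect no serious obstacle: the only substantive ingredient is Euclid's lemma (equivalently, the invertibility of $\ell$ modulo $k$ via B\'ezout's identity), and everything else is elementary counting. The one point to state carefully is the periodicity observation, which guarantees that these finitely many distinct values are revisited cyclically as $n$ ranges over all of $\naturals$, thereby justifying the wording ``cycles through''.
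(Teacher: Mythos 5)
Your proof is correct and follows essentially the same route as the paper's: both arguments reduce to showing that a repeated value among $k$ consecutive terms would force $k \mid i\ell$ for some $0 < i < k$, which coprimality forbids, so all $k$ residue classes are attained. Your invocation of Euclid's lemma is simply a cleaner packaging of the paper's more hands-on factorization of $i$ into divisors, and your explicit periodicity remark tidies up the ``cycles through'' wording that the paper leaves implicit.
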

\begin{proof}
Any path longer than $k$ visits some class twice, and if the shortest cycle is $k$, then it visits every class. 

Suppose there is a cycle of length less than $k$; then $n\ell = c + mk$ and $(n+i)\ell = c + m'k$ and hence $i\ell = (m'-m)k$, with $i < k$. Since $\ell$ is an integer $i$ divides $(m'-m)k$ then $i = pr$ for $p,r\in\naturals{}$ such that $\frac{m'-m}{p}$ is integer and $\frac{k}{r}$ is integer. Observe that since $r\le i< k$ we have $\frac{k}{r} > 1$. But this implies that $\frac{k}{r}$ divides $k$ and $\ell$, contradicting $\gcd(k,\ell) = 1$.
\qed \end{proof}

\twooppcounter*
\begin{proof}
Let $b = kd, c=\ell d$, where $k,\ell$ are co-prime.

We show there exists $m,n\ge 0$ such that $mb - cn = d$. We have $mb - cn = d \iff mkd -n\ell d = d \iff mk -n \ell = 1$. Then choose $m = \frac{1+n\ell}{k}$. By \cref{lemma:coprimestuff} there exists $n$ such that $n\ell$ is in any modulo class modulo $k$, and thus too for $1+n\ell$ and so $k$ divides $1+n\ell$ for some $n$.

Hence the set $\set{x +d\naturals}$ is included in the reachability set: we obtain $x+ jd$ by applying function $f$ $mj$ times and applying function $g$ $nj$ times. 
Similarly, we can find $m',n'\geq 0$ such that $m'b - cn' = -d$ and thus 
$\set{x +d\ints}$ is within the reachability set. 
\qed \end{proof}
\label{appendix:lemma:coprimestuff}

\subsection{Extended argument for non opposing counters }
\label{appendix:nonopposingcounters}

The following shows that if $\set{x + d\naturals}$ does lead to $\set{y+d\naturals}$, with $y<x$ and $y \equiv x \mod d$, then indeed we can reach $\set{z+d\naturals}$ for any $z\equiv x\mod d$ by reapplying the same set of functions which lead from $x$ to $y$.

\begin{lemma}\label{lemma:inducezcycles}Assume the reference counter $h(x) = x +d$ with $d\ge 0$. Suppose all growing functions are growing outside of $[-B,C]$. Consider $\vecit{x}{0} \in I$ and a path $\vecit{x}{0},f_{i_1},\vecit{x}{1},f_{i_2},\dots,f_{i_m},\vecit{x}{m}$ such that $\vecit{x}{j} = f_{i_j}(\vecit{x}{j-1})$, $\vecit{x}{j} \le -B$, $\vecit{x}{m} < \vecit{x}{0}$ and $\vecit{x}{0} \equiv\vecit{x}{m} \mod d$.

Then $\set{\vecit{x}{0} + d\ints} \subseteq I$.
\end{lemma}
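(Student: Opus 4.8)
The plan is to follow the strategy sketched in the surrounding text: re-apply the fixed sequence of functions witnessing the path over and over, so as to drive the base point arbitrarily far in the negative direction while preserving both the period $d$ and the residue class modulo $d$. First I would package the whole path as a single affine map. Writing $F = f_{i_m}\circ\cdots\circ f_{i_1}$, each $f_{i_j}(x) = a_{i_j}x + b_{i_j}$ is non-inverting (the path uses only counters and non-inverting growing functions, before any inverter is applied), so $F(x) = Ax + B$ with $A = \prod_j a_{i_j} \ge 1$, and by hypothesis $F(\vecit{x}{0}) = \vecit{x}{m}$.

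The key mechanism is a counter-absorption step, in the spirit of \cref{lemma:zsaturation} but kept on the $\naturals$ side. If $\set{x + d\naturals} \subseteq I$ and $g(x) = ax+b$ is any of the functions, then $g(\set{x + d\naturals}) = \set{g(x) + ad\naturals} \subseteq I$; since the reference counter $h(x) = x+d$ is available and $d \mid ad$, applying $h$ repeatedly refines the period from $ad$ back down to $d$, giving $\set{g(x) + d\naturals} \subseteq I$. Composing this along the path shows $\set{\vecit{x}{0} + d\naturals} \subseteq I \implies \set{\vecit{x}{m} + d\naturals} \subseteq I$, and re-applying the path repeatedly yields $\set{F^k(\vecit{x}{0}) + d\naturals} \subseteq I$ for every $k \ge 0$.

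Two elementary computations with $F$ then close the argument. For residues, $F(z) - F(\vecit{x}{0}) = A(z - \vecit{x}{0})$, so any $z \equiv \vecit{x}{0} \pmod d$ maps to $F(z) \equiv \vecit{x}{m} \equiv \vecit{x}{0} \pmod d$, and inductively every $F^k(\vecit{x}{0}) \equiv \vecit{x}{0} \pmod d$. For divergence, the consecutive gaps satisfy $F^{k+1}(\vecit{x}{0}) - F^k(\vecit{x}{0}) = A^k(\vecit{x}{m} - \vecit{x}{0})$, each at most $\vecit{x}{m} - \vecit{x}{0} < 0$ because $A^k \ge 1$; hence $F^k(\vecit{x}{0}) \to -\infty$. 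Consequently every $z \equiv \vecit{x}{0} \pmod d$ lies in $\set{F^k(\vecit{x}{0}) + d\naturals}$ for all sufficiently large $k$, so that $\set{\vecit{x}{0} + d\ints} = \bigcup_{k\ge 0} \set{F^k(\vecit{x}{0}) + d\naturals} \subseteq I$, as required.

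The hypotheses that every intermediate point satisfies $\vecit{x}{j} \le -B$ and that the growing functions grow outside $[-B,C]$ are what make the iteration legitimate within the larger procedure: each $f_{i_j}$ is monotonically increasing, so re-applying the path from the smaller point $\vecit{x}{m}$ forces each new intermediate to stay below the corresponding old one, hence $\le -B$, and the functions therefore continue to act in the same growing regime. I expect the part needing the most care to be the counter-absorption bookkeeping — confirming that applying $g$ scales the period by $a$ and that the reference counter genuinely restores period $d$ via $d \mid ad$ — together with the residue-preservation check that guarantees the $\naturals$-linear pieces tile precisely $\set{\vecit{x}{0} + d\ints}$; the divergence to $-\infty$ itself is immediate once the gaps are seen to grow geometrically.
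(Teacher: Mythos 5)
Your proof is correct and takes essentially the same route as the paper's: iterate the composed path map $F=f_{i_m}\circ\dots\circ f_{i_1}$ to obtain points in the same residue class modulo $d$ descending to $-\infty$ (the paper argues this via monotonicity of the re-application from the smaller point $\vecit{x}{m}$, staying below $-B$; your closed-form gap computation $F^{k+1}(\vecit{x}{0})-F^k(\vecit{x}{0})=A^k(\vecit{x}{m}-\vecit{x}{0})$ with $A\ge 1$ is just a more explicit rendering of the same step), and then use the reference counter $h$ to fill in all of $\set{\vecit{x}{0}+d\ints}$. The only cosmetic difference is that the paper climbs pointwise, applying $h^\ell$ to reach each $\vecit{x}{0}+kd$ individually, whereas you phrase the climb as absorption of the scaled period $ad\naturals$ back to $d\naturals$ at the level of $\naturals$-linear sets; both are sound.
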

\begin{proof}
The re-application of $f_{i_m}\circ \dots\circ f_{i_1}$ results in the same modulo class by modulo arithmetic. Further since $\vecit{x}{j} \le -B$ then any growing $f_{i_j}$, is growing by at least as much as in the first application. 
Thus $f_{i_m}\circ \dots\circ f_{i_1}(\vecit{x}{m}) < \vecit{x}{m}$.

Hence for any $M < -B$, there exists $c < M, n\in\naturals$, such that $(f_{i_m}\circ \dots\circ f_{i_1})^n(\vecit{x}{0}) = c \equiv \vecit{x}{0} \mod d$. Hence for any $\vecit{x}{0} + kd \in \set{\vecit{x}{0} + d\ints}$ with $k\in\ints$ then there exists $n,\ell$ such that $(f_{i_m}\circ \dots\circ f_{i_1})^n(\vecit{x}{0}) \le \vecit{x}{0} + kd$ and $h^\ell\circ (f_{i_m}\circ \dots\circ f_{i_1})^n(\vecit{x}{0}) = \vecit{x}{0} + kd$.\qed
\end{proof}

\begin{remark}
By symmetry, \cref{lemma:inducezcycles} also holds for the opposite direction. That is when $h(x) = x -d$, $d >0$, inequalities are inverted and $C$ is used in place of $-B$.
\end{remark}

We now consider inductively applying non-inverting functions to sets $\set{x + d\mathbb{N}} \in I$. Then add $\set{f_i(x) +d\mathbb{N}}$ provided it is not already a subset of some set already in $I$. If $\set{f_i(x) +d\mathbb{N}}$ is new and a new modulo class we can again apply \cref{lemma:inverterzs}, from whence we may also need to apply \cref{lemma:zsaturation}.

However, when this procedure does not saturate there eventually exists be a sequence of actions in which $\set{x+d\naturals}$ leads to $\set{y+d\naturals}$ with $x\equiv y\mod d$ according to a path in \cref{lemma:inducezcycles}. In particular $y < x < -B$ since if $x < y$ then $\set{y + d\naturals{}} \subseteq \set{x+d\naturals{}}$, some modulo class must repeat after at most $d$ steps, and eventually the procedure must stay $<-B$ for at least $d$ steps. Then, according to \cref{lemma:inducezcycles}, a new $\ints$-linear set can be added ($\set{x+d\ints}$) (which again can be saturated using \cref{lemma:zsaturation}). We repeat this process until all $\naturals$-linear sets are invariant. This process terminates, as each application of  \cref{lemma:inducezcycles} adds a new $\ints$-linear set with period $d$, of which there are at most $d$.

\subsection{Proof of \cref{lemma:complexityunarypoly}}
\complexityunarypoly*
\begin{proof}

In the no-counter case, by \cref{lemma:growthbound}, there is an interval $[-C,C]$ of size approximately $\frac{|b|+|M|}{|a| -1}$, where $|b|, |M|, |a|$ are all numbers represented in the input, and thus is of polynomial of size. This means the gap is of polynomial size, and thus the saturation algorithm, which must in each step add a point or terminate, is of polynomial time.

In each counter-case there is a reference counter period $d$ arising directly from the input as the counter part of some function, or in the case of two opposing counters, possibly the sum of two counter parts. For this period $d$ there are at most $3d$ possible types of non-singleton invariant ($\set{x + d\naturals{}}$ or $\set{x - d\naturals{}}$ for some $x$ and ${x + d\ints}$ for $x\in \set{0,\dots,d}$ ). Singletons only arise in the interval $[-C,C]$ if they exist. Hence, there are at most $O(2C + 3d)$ steps which change the invariant. 

In the case of two opposing counters, immediately all invariants are of the form ${x + d\ints}$ for $x\in \set{0,\dots,d}$, and the reachable modulo classes can be found in $O(dk)$ (recall $k$ is the number of functions), by breadth first search. 

In the case of all counters in the same direction, there are two phases, each has a bounded number of steps. First we consider updates which move in the direction of the counters and secondly we consider updates which move against the counters.

In the case of moving with the counters, outside of $[-C,C]$ all functions are growing. Hence, by conducting breadth first search on a priority queue that always expands the minimal element we can find the sets of the form ${x + d\naturals}$ for $x\in \set{0,\dots,d}$ in polynomial time. Only inside $[-C,C]$ does the search result in smaller elements (which there are at most $2C$ such steps), and in the remaining case we either expand to find an element already covered, or we find the smallest element in that modulo class. Thus this step takes $O(dk + 2C)$ time.

Secondly we search for cycles in the direction opposing the counters, to see if we can turn any ${x + d\naturals}$ sets into ${x + d\ints}$ sets, that is invariants induced  by \cref{lemma:inducezcycles}. There can be a path of length at most $d$ steps outside of $[-C,C]$ before a cycle is found, so the running time is $O(2Cd).$\qed
\end{proof}

\section{Tool}
\label{tool:webinterface}

The tool's output, when, applied to the MU Puzzle can be seen to produce the invariant $\set{1+3\ints} \cup \set{2+3\ints}$ of \cref{mu:puzzle}:

\begin{lstlisting}
-----------------
Interpretation of input
start: 1 target: {0} functions: [f(x) = x - 3, f(x) = 2x]
-----------------
invariant: {1 +3Z} $\cup$ {2 +3Z}
-----------------
reachability: unreachable
target {0} disjoint from invariant
-----------------
Proof of invariance
Set      under         gives        within
-------  ------------  -------  --  --------
{1 +3Z}  f(x) = x - 3  {1 +3Z}  $\ \subseteq\ $   {1 +3Z}
{1 +3Z}  f(x) = 2x     {2 +6Z}  $\ \subseteq\ $   {2 +3Z}
{2 +3Z}  f(x) = x - 3  {2 +3Z}  $\ \subseteq\ $   {2 +3Z}
{2 +3Z}  f(x) = 2x     {4 +6Z}  $\ \subseteq\ $   {1 +3Z}
-----------------
time invariant 0.000556707
time proofOfInvariant 0.000496387
-----------------
\end{lstlisting}

The web-interface can be found at \url{http://invariants.davidpurser.net}.

 \end{document}